\renewcommand{\theequation}{\thesection.\arabic{equation}}
\numberwithin{equation}{section} 
 \newtheorem{thm}{Theorem}[section]
 \newtheorem{proposition}{Proposition}[section]
 \newtheorem{lem}{Lemma}[section]
 \newtheorem{exam}{Example}[section]
 \newtheorem{corollary}{Corollary}[section]
 \newtheorem{definition}{Definition}[section]
\begin{document}

\title{An Exact Theory of Causal Emergence for Linear Stochastic Iteration Systems}

\author[1]{Kaiwei Liu}
\author[2]{Bing Yuan} 
\author[1,2]{Jiang Zhang  \thanks{Email: zhangjiang@bnu.edu.cn}   } 
 
\affil[1]{ School of Systems Science, Beijing Normal University, 100875, Beijing, China}
\affil[2]{Swarma Research, 102300, Beijing, China
}
       
 \renewcommand\Authands{ and }
 \date{}
 \maketitle
 
\begin{abstract}
After coarse-graining a complex system, the dynamics of its macro-state may exhibit more pronounced causal effects than those of its micro-state. This phenomenon, known as causal emergence, is quantified by the indicator of effective information. However, two challenges confront this theory: the absence of well-developed frameworks in continuous stochastic dynamical systems and the reliance on coarse-graining methodologies. In this study, we introduce an exact theoretic framework for causal emergence within linear stochastic iteration systems featuring continuous state spaces and Gaussian noise. Building upon this foundation, we derive an analytical expression for effective information across general dynamics and identify optimal linear coarse-graining strategies that maximize the degree of causal emergence when the dimension averaged uncertainty eliminated by coarse-graining has an upper bound. Our investigation reveals that the maximal causal emergence and the optimal coarse-graining methods are primarily determined by the principal eigenvalues and eigenvectors of the dynamic system's parameter matrix, with the latter not being unique. To validate our propositions, we apply our analytical models to three simplified physical systems, comparing the outcomes with numerical simulations, and consistently achieve congruent results. 
\end{abstract}

 \textit{ Keywords: } Causal Emergence, Effective Information, Linear Stochastic Iteration System, Coarse-graining.

\section{Introduction}

Many complex systems \cite{Holland2000} in reality, such as cities \cite{west2018scale, Li2017, Dong2020}, companies \cite{zhang2021scaling, Zhang2021, Xu2023}, bird flocks \cite{Wicks2007, Hartman2006}, perception systems \cite{Liu2023a, Wang2023, Wang}, brains \cite{Jingnan2021, Sporns2021, Varley2023}, cells \cite{Zhao2023, Dong2016}, molecules, etc., all exhibit emergent behaviors or statistical laws on a macro-level that cannot be simply derived from micro-level properties or dynamics. However, how to quantitatively characterize emergence has only recently garnered widespread attention \cite{Yuan2024,oden2003research,seth2008measuring,Rosas2020, Hoel2013, Hoel2016,rosas2024software}. The theories of causal emergence attempt to capture the conception of emergence from the point of view of causality. Intuitively, causal emergence refers to the phenomenon in dynamical systems where stronger causal effects can be obtained on macro-states by coarse-graining the micro-states \cite{Hoel2017}. There are three different ways to quantify the idea of causal emergence, which are based on effective information \cite{Hoel2013,Hoel2016}, partial information decomposition \cite{williams2010nonnegative,rassouli2019data,Rosas2020,Varley2023,rosas2024software}, and dynamical independence \cite{seth2008measuring,Sporns2021,Liu2023a}, respectively. In this paper, we mainly focus on the first way.


Erik Hoel introduced the initial quantitative theory for causal emergence, founded on effective information ($EI$) \cite{Hoel2013, Hoel2016, Hoel2017}. Nevertheless, the original framework is limited to quantifying discrete Markov chains in both the time domain and state space \cite{Liu2023, Zhang2024}, overlooking continuous state spaces. To extend the causal emergence theory in continuous spaces, P. Chvykov and E. Hoel put forth the theory of causal geometry \cite{Chvykov2020}, wherein they devised a method for calculating $EI$ in functional mappings across continuous state spaces \cite{Transtrum2015, Zhang2021}. Nonetheless, this theory solely explores general functional mappings and neglects the multi-step dynamical evolution, rendering it inapplicable to dynamical systems in continuous state spaces. Furthermore, all of Hoel's theories \cite{Hoel2013, Hoel2016, Hoel2017, Klein2020, Chvykov2020, Comolatti2022} necessitate a predefined coarse-graining strategy to discern instances of causal emergence. While this coarse-graining strategy can be derived by maximizing $EI$ for macro-dynamics, solving the optimization problem for continuous variables proves challenging \cite{Liu2021, Zhang2021, Yang2023}.

There exist various approaches to coarse-grain the micro-states of complex systems, including manually designed renormalization methods \cite{kadanoff1966scaling}, traditional dimensionality reduction techniques \cite{Villemagne1987, Boley1994, Gallivan1994, Gugercin2008, Antoulas2005}, and machine learning-based methods for coarse-graining or renormalization \cite{Liu2021, Zhang2021, Hu2020}. To automatically discover coarse-graining strategies that optimize causal emergence, J. Zhang and K. Liu introduced a machine-learning framework known as the Neural Information Squeezer (NIS) \cite{Zhang2022}, employing reversible neural networks. This framework facilitates the automatic extraction of effective coarse-graining strategies and macro-state dynamics, enabling the direct identification of causal emergence from diverse time series data. Subsequently, the research team proposed the enhanced NIS+ framework \cite{Yang2023}, which integrates the optimization of coarse-graining through $EI$ maximization into machine learning. NIS and NIS+ have successfully addressed the challenge of identifying causal emergence in data by optimizing coarse-graining techniques and macro-dynamics within continuous stochastic dynamical systems through $EI$ maximization. Nonetheless, machine learning-based methods \cite{Scholkopf2021, Murphy2023, Murphy2024} rely heavily on data adequacy and the level of neural network training. While they can provide numerical solutions, they lack a ground truth for assessing the quality of model training and the reliability of outcome indicators.

To overcome the limitations identified in previous studies, this paper will first derive analytical solutions for $EI$ and causal emergence in linear stochastic iterative systems. It will then determine the explicit expression for maximal causal emergence and the corresponding optimal coarse-graining strategies, under the condition that the dimension-averaged uncertainty eliminated by coarse-graining is constrained by a specified threshold. Subsequent to this, numerical simulation experiments will be conducted on simplified physical systems to validate the theoretical findings. Additionally, we will explore challenges such as maximizing causal emergence under an alternative condition involves constructing the coarse-graining map through a combination of dimensionality reduction projection and rotation, showcasing the interplay between determinism and degeneracy, examining the connection between maximizing causal emergence and minimizing prediction errors in dynamics, and addressing nonlinear dynamics. The primary objective of this research is to address deficiencies in $EI$ related to Transition Probability Matrices (TPM) \cite{Hoel2013, Hoel2017} and causal geometry within a toy model context \cite{Chvykov2020}. Furthermore, it aims to lay a theoretical groundwork to enhance the identification of causal emergence through machine learning \cite{Zhang2022, Yang2023}.

\section{Basic Notions and Problems Formulation}
\subsection{Continuous linear stochastic iteration system}
A stochastic iteration system \cite {Dunsmuir1976} is a sequence of random variables $x_t$ at different times $t$ which usually represent the dynamic evolution of a certain quantity, such as stock prices, temperature changes, traffic flows, etc. This paper mainly focuses on linear high-dimensional stochastic iteration systems \cite{Hannan1984} to reduce our calculation. 

For variable $x_t\in\mathcal{R}^n$ at time $t$ in the stochastic iteration system, the evolution of $x_t$ follows the stochastic iterative equation
\begin{eqnarray}\label{MicroDynamics}
	x_{t+1}=Ax_t+\varepsilon_t.
\end{eqnarray}
We define matrix $A\in\mathcal{R}^{n\times n}$ as a dynamical parameter matrix. In Equation (\ref{MicroDynamics}), $\varepsilon_t\sim\mathcal{N}(0,\Sigma)$ represents Gaussian noises with zero means and the covariance matrix $\Sigma\in\mathcal{R}^{n\times n}$. We specify the ranks of matrices ${\rm rk}(A)={\rm rk}(\Sigma)=n$. ${\rm rk}(\cdot)$ means the rank of a matrix. We can easily observe that the probability of $x_{t+1}$ falling in a region centered with $Ax_{t}$.

\subsection{Coarse-graining}
In the realm of linear stochastic iterative systems, we confine coarse-graining strategies to linear maps. Therefore, a direct approach entails employing the relatively straightforward projection matrix coarse-graining method. 

Generally, the coarse-graining strategy can be represented as $\phi(x_t): \mathbb{R}^n \to \mathbb{R}^m, m < n$. It is designed to map micro-states to a lower-dimensional space and to convert micro-states into macro-states.
\begin{definition}
	\label{dfn.Coarse-graining}
	(Coarse-graining strategy):
	For variable $x_t$ at time $t$ in the linear stochastic iteration system, we define linear mapping
	\begin{eqnarray}\label{Coarse-graining-mapping}
		\phi(x_t)=Wx_t,
	\end{eqnarray}
	 where, $W\in\mathcal{R}^{k\times n}$, ${\rm rk}(W)=k$, to map $x_t$ in high-dimensional space $\mathcal{R}^n$ to $y_t$ in low-dimensional space $\mathcal{R}^k$, achieving coarse-graining  of variables
	 \begin{eqnarray}\label{Coarse-graining-transformation}
	 	y_t=\phi(x_t)=Wx_t
	 \end{eqnarray}
	 in the stochastic iteration system.
\end{definition}

\subsection{Macro dynamical system}
Coarse-graining strategy can map a set of micro-states to specific macro-states, thus can naturally derive the expressions from micro-state parameter matrices $A$ and $\Sigma$ for dynamical models on macro-state space which can be described by macro-state parameter matrices $A_M$ and $\Sigma_M$. Since $W\in \mathcal{R}^{n\times k}$ in linear coarse-graining strategy is irreversible, so we need Moore-Penrose generalized inverse matrix \cite{Barata2012} $W^\dagger$ in matrix theory \cite{Horn2012} to derive $A_M$. 

According to Equations (\ref{Coarse-graining-transformation}) and (\ref{MicroDynamics}), we can get that
\begin{eqnarray}\label{MicrotoMacroDynamics2}
\begin{aligned}	y_{t+1}&=Wx_{t+1}=WAx_t+W\varepsilon_t.\end{aligned}
\end{eqnarray}
Due to the fact that both the micro-dynamical model itself and the coarse-graining strategies are linear transformations, and based on the properties of linear algebra, there is also a set of equations for the macro-dynamics. Thus, we hope that there is a matrix $A_M$ satisfying the following equation:
\begin{eqnarray}\label{HyMacroDynamics2}
   		\begin{aligned}
   			y_{t+1}&=A_My_t+\varepsilon_{M,t},
   		\end{aligned}
   	\end{eqnarray}
    And according to Equations (\ref{HyMacroDynamics2}), (\ref{MicrotoMacroDynamics2}) and $y_t=Wx_t$, we can get that
\begin{eqnarray}\label{MacroDynamicsPara}
   	\begin{cases}
   	    WA=A_MW,\\
            W\varepsilon_t=\varepsilon_{M,t}.
   	\end{cases}
   \end{eqnarray}
Since $\varepsilon_t\sim \mathcal{N}(0,\Sigma)$, $\varepsilon_{m,t}=W\varepsilon_t\sim \mathcal{N}(0,\Sigma_M)$. $W\in \mathcal{R}^{n\times k}$ is irreversible, so we require the Moore-Penrose generalized inverse matrix $W^\dagger\in \mathcal{R}^{n\times k}, WW^\dagger=I_k$, to solve Equation (\ref{MacroDynamicsPara}). Multiply both sides of the first equation of Equation (\ref{MacroDynamicsPara}) by $W^\dagger$, we can obtain the macro-state parameter matrix 
\begin{equation}
    A_M=WAW^\dagger,
\end{equation}
and the macro dynamics in the Definition \ref{thm.Macro-dynamical} as follows.
\begin{definition}
	\label{thm.Macro-dynamical}
	(Macro dynamics):
	For the linear stochastic iteration system in Equation (\ref{MicroDynamics}), the variables $x_t\in \mathcal{R}^n, t=0,1,\dots$, are mapped to macro state variables $y_t\in \mathcal{R}^k, t=0,1,\dots$, and there exists a new evolutionary dynamics between macro state variables, which we call macro dynamics. Macro dynamics can be represented by macro iterative equations
	\begin{eqnarray}\label{MacroDynamics}
		y_{t+1}=A_M y_t+\varepsilon_{M,t},
	\end{eqnarray}
    in which $A_M=WAW^\dagger\in \mathcal{R}^{k\times k}$, $\varepsilon_{m,t}\sim \mathcal{N}(0,\Sigma_M), \Sigma_m=W\Sigma W^{T}$. $W^\dagger$ is the Moore-Penrose generalized inverse matrix of $W$.
\end{definition}

\subsection{Inverse-coarse-graining}
To evaluate the information loss caused by coarse-graining and derive macro dynamics, we also need to transform the macro-state back to the micro-state, which is called inverse-coarse-graining. Due to the difference in dimensions $n$ and $k$ of the vectors before and after transformation, coarse-graining is not a reversible process. For linear transformations, irreversibility is manifested as $W\in \mathcal{R}^{n\times k}$ in linear coarse-graining is irreversible. 
\begin{definition}
	\label{dfn.inverse-Coarse-graining}
	(Inverse-coarse-graining):
	As $W^\dagger$ is the Moore-Penrose generalized inverse matrix of $W$, we define
	\begin{eqnarray}\label{Inverse-coarse-graining-mapping}
		\phi^\dagger(y_t)=W^\dagger y_t,
	\end{eqnarray} 
     $W^\dagger\in \mathcal{R}^{n\times k}$, ${\rm rk}(W^\dagger)=k$ as the inverse-coarse-graining mapping of variables in the stochastic iteration system. By applying the inverse-coarse-graining  
     \begin{eqnarray}\label{Inverse-coarse-graining-transformation}
     	\hat{x}_t=W^\dagger y_t,
     \end{eqnarray}
     we can map the variable $y_t \in \mathcal{R}^k$ after dimensionality reduction back to a high-dimensional space $\mathcal{R}^n$. 
\end{definition}
With inverse-coarse-graining, we can then estimate the prediction loss function of the dynamics after coarse-graining.

\section{Effective information and causal emergence}
With the definitions and expressions of micro dynamics and macro dynamics, we can study the phenomenon of causal emergence. We start with the calculation of effective information and then use the method of difference of effective information between micro-dynamics and macro-dynamics to calculate the degree of causal emergence.

\subsection{Effective information}\label{sec:EI}
We refer to Hoel's effective information ($EI$) metric to measure the magnitude of causal effects in dynamical systems. In general, $EI$ can be understood as the mutual information between the probability distributions of the states of variables at two different time points when the state at the earlier time is intervened artificially and keeps the causal mechanism, i.e., the transitional probability $Pr(x_{t+1}|x_t)$ unchanged, which is

\begin{equation}
\label{original_EI}
    EI=I(\tilde{x}_{t+1},x_t|do(x_t\sim U(X))),
\end{equation}
where $x_t\in X$, and $U(X)$ means the uniform distribution or maximum entropy distribution on $X$. The intervention is formalized using Judea Pearl's theory \cite{Pearl2009} of causality, particularly through $do(\cdot)$ operations, which means artificially defining the probability distribution space of a variable. Consequently, the distribution of $x_{t+1}$ will be indirectly altered by the intervention on $x_t$ through the causal mechanism $Pr(x_{t+1}|x_t)$. Therefore, $\tilde{x}_{t+1}$ denotes the random variable representing $x_{t+1}$ following the intervention on $x_t$. To be noticed, Equation (\ref{original_EI}) is only an operational definition and the intervention of $do$ operator is just an imaginary operation to calculate $EI$ but has no real physical meanings. 

The most classic applications of $EI$ are in the analysis of transitional probability matrix (TPM) for Markov chains. Assuming our probability transition matrix is $M=(M_{ij})_{n\times n}=(M_1^T,\dots,M_i^T,\dots,M_n^T)^T\in \mathcal{R}^{n\times n}$, $M_i=(M_{i1},\dots,M_{ij},\dots,M_{in})\in \mathcal{R}^{1\times n}$, $1\leq i,j\leq n$. Intervention on Markov transition matrix at $t$, let $P(X_{t}=x_{t})=1/n$, then the state at $t+1$ will follow the effect distribution $E_D=(\sum_i M_{i1}/n,\dots,\sum_i M_{ij}/n,\dots,\sum_i M_{in}/n)\in \mathcal{R}^{1\times n}$. So, $EI$ is \cite{Yuan2024}:
\begin{eqnarray}
\label{EI_MIJ}
    EI(M)=\frac{1}{n}\sum_{i=1}^nD_{KL}(M_i||E_D)=\frac{1}{n}\sum_{i
    ,j}M_{ij}\log_2\left(\frac{nM_{ij}}{\sum_lM_{lj}}\right),
\end{eqnarray}
where $D_{KL}(\cdot||\cdot)$ is the KL-divergence between two probability distributions. The benefit of expressing Equation (\ref{original_EI}) to the new form of Equation (\ref{EI_MIJ}) is to stress that $EI$ is only the function of the dynamics($M$) and independent of the distribution of $x_t$. This calculation method requires both variable time and state space to be discrete. 

According to previous works \cite{Hoel2013,Hoel2017,Zhang2024}, effective information $EI(M)$ can be divided into two parts, determinism $-\left<H(M_i))\right>_i$ and degeneracy $H(E_D)$, as
\begin{eqnarray}
\label{EI_MIJ_DegDet}
\begin{aligned}
EI(M)&=-\left<H(M_i))\right>_i+H(E_D)\\&=\mathop{\frac{1}{n}\sum_{i,j}M_{ij}\log_2\left(M_{ij}\right)}_{Determinism}+\mathop{\sum_{j}\left[-\frac{\sum_lM_{lj}}{n}\log_2\left(\frac{\sum_lM_{lj}}{n}\right)\right]}_{Degeneracy}.
\end{aligned}
\end{eqnarray}
Determinism measures how reliably $x_t$ leads to the future state $x_{t+1}$ of the system. Degeneracy measures to what degree there is deterministic convergence (not due to noise) from other states onto the future states $x_{t+1}$ specified by ${x_t}$, degeneracy refers to multiple ways of deterministically achieving the same effect. Two values can be calculated separately to obtain $EI$.

Further, we can generalize $EI$ to a continuous state space. We should replace summations with integrals, and change the intervened distribution of $x_t$ from a discrete equal probability distribution $P(x_{t})=1/n$ to a continuous uniform distribution $U([-L/2, L/2]^n)$ in which probability density function is $p(x_{t})=1/L^n$. Since the intervened distribution can not be defined on an infinite space such as $\mathcal{R}^n$, we set the intervention on a limited sub-region of $\mathcal{R}^n$ which is $[-L/2,L/2]^n$, where $L$ is a hyperparameter with a value being at least larger than the maximum of $x_t$ within finite time steps. 

We can refer to Hoel's framework \cite{Hoel2013,Klein2020} to derive an expression for the effective information of linear stochastic iteration system (Equation \ref{MicroDynamics}). Actually, according to Equation \ref{MicroDynamics}, the conditional distribution of $x_{t+1}$ under given $x_t$ is a Gaussian distribution, $p(x_{t+1}|x_t)=\mathcal{N}(Ax_t,\Sigma)$. Therefore, we can compute the $EI$ for this Gaussian distribution \cite{Zhang2022} as $EI = I(x_{t+1}, x_t | do(x_t \sim U))$, drawing an analogy to mutual information and information entropy within a Gaussian distribution \cite{Venkatesh2024} as
\begin{eqnarray}\label{originEI}
		EI(A,\Sigma)=\ln\displaystyle\frac{|\det(A)|L^n}{(2\pi e)^\frac{n}{2}\displaystyle \det(\Sigma)^\frac{1}{2}}.
\end{eqnarray}
Similar to $EI(M)$ of discrete systems, $EI(A,\Sigma)$ of continuous systems can also be decomposed into two terms: determinism $-\left<H(p(x_{t+1}|do(x_t)))\right>$ and degeneracy $H(E_D(x_{t+1}))$ as
\begin{eqnarray}\label{originEIdegdet}
    \begin{aligned}
		EI(A,\Sigma)&=-\left<H(p(x_{t+1}|x_t))\right>+H(E_D(x_{t+1}))\\&=\mathop{-\ln\left[(2\pi e)^\frac{n}{2}\det(\Sigma)^\frac{1}{2}\right]}_{Determinism}+\mathop{\ln\left(|det(A)|L^n\right)}_{Degeneracy}.
    \end{aligned}
\end{eqnarray}
The higher the determinism or the lower the degeneracy, the stronger the causal effect of the stochastic iteration system \cite{Hoel2013}. The specific derivation process can refer to the calculation process in \ref{appendixEIandce}. 

Since the calculation of $EI$ involves multiple integrals, $L^n$ appears in the expression of $EI$. When the value of $L$ is high, the value of $EI$ becomes higher after $n$ powers of $L$. So we stipulate that when calculating causal emergence, the $EI$ in different dimensions is dimensionally averaged to eliminate the impact of power growth of $L$. Another reason for taking the average on the dimension is that for the comparison of $EI$ from different dimensions, dimension averaged $EI$ can eliminate $L$, which is shown in Section \ref{sec.ce}. Here, in Theorem \ref{thm.Effective-information}, we define a new index that is effective information for stochastic iterating systems ($\mathcal{J}$).
\begin{definition}
	\label{thm.Effective-information}
	(Effective information for stochastic iterating systems):
	For the linear stochastic iteration systems like Equation (\ref{MicroDynamics}), the effective information of the dynamical system is calculated as
	\begin{eqnarray}\label{EI}
		\mathcal{J}(A,\Sigma)\equiv \frac{EI(A,\Sigma)}{n}=\frac{1}{n}\ln\displaystyle\frac{|\det(A)|L^n}{(2\pi e)^\frac{n}{2}\displaystyle \det(\Sigma)^\frac{1}{2}}=\ln\displaystyle\frac{|\det(A)|^\frac{1}{n}L}{(2\pi e)^\frac{1}{2}\displaystyle \det(\Sigma)^\frac{1}{2n}}
	\end{eqnarray}
    $\det(\cdot)$ represents the determinant value corresponding to a matrix $\cdot$. $|\cdot|$ represents the absolute value of $\cdot$. $L$ represents the interval size of $do(\cdot)$ intervention. 
    
    The effective information $\mathcal{J}$ can also be decomposed into two parts, determinism and degeneracy as
    \begin{eqnarray}\label{EIDeterminismDegeneracy}
    \begin{aligned}
    	\mathcal{J}(A,\Sigma)
     &=\mathop{-\ln\left[(2\pi e)^\frac{1}{2}\det(\Sigma)^\frac{1}{2n}\right]}_{Determinism}+\mathop{\ln\left(|det(A)|^\frac{1}{n}L\right)}_{Degeneracy}.
     \end{aligned}
    \end{eqnarray}
\end{definition}
The proof of Theorem \ref{thm.Effective-information} can refer to \ref{appendixEIandce}. $\mathcal{J}(A,\Sigma)$ can be used to measure the causal effects of micro dynamics and macro dynamics then calculate causal emergence. To be noticed, the scope $L$ of $do$ intervention in macro-state dynamics $EI$ is related to $W\in \mathcal{R}^{n\times k}$. So we need to limit the range of $W$ to ensure the $do$ intervention range of macro-state $y_t\in\mathcal{R}^k$ and micro-state $x_t\in\mathcal{R}^n$ as $do(y_t)\sim U([-L/2,L/2]^k)$ and $do(x_t)\sim U([-L/2,L/2]^n)$, which we will conduct more analysis in Subsection \ref{sec:Maximizing-causal-emergence}. 

\subsection{Causal emergence}\label{sec.ce}
After obtaining effective information for stochastic iterating systems $\mathcal{J}$ at different dimensions $k$ of variables $x_t$, we still have a problem to solve. Except for the influence of indices, the value of hyperparameter $L$ itself, which is artificially assumed, also has a great influence on the results of $\mathcal{J}$. This hyperparameter can be eliminated by calculating causal emergence for stochastic iterating systems as Definition \ref{JMsubJm}.
\begin{definition}\label{JMsubJm}
(Causal emergence for stochastic iterating systems) Hyperparameter $L$ can be eliminated by subtraction as
\begin{eqnarray}\label{dimensionally_averaged}
	\Delta \mathcal{J} = \mathcal{J}_M-\mathcal{J}_m,
\end{eqnarray}
where $\mathcal{J}_m\equiv \mathcal{J}(A,\Sigma)$ is the $EI$ for the micro-dynamics (Equation \ref{MicroDynamics}, and $\mathcal{J}_M\equiv \mathcal{J}(A_M,\Sigma_M)$ is the $EI$ for the macro-dynamics (Equation \ref{MacroDynamics}), and $\Delta \mathcal{J}$ is the degree of causal emergence for stochastic iterating systems. 
\end{definition}
Through the calculations in the \ref{appendixEIandce}, we obtain the following theorem. 
In this theorem, we not only calculate the analytical solution for causal emergence but also ensure that the degree of the analytical solution is only related to the parameters of stochastic iterating systems without the influence of $L$.

\begin{thm}
	\label{thm.Causal-emergence}
	(Analytical solution for causal emergence):
	For the linear stochastic iteration systems like Equation (\ref{MicroDynamics}), causal emergence of the system after coarse-graining  $y_t=\phi(x_t)=Wx_t$, $W\in \mathcal{R}^{k\times n}$, is calculated as 
	\begin{eqnarray}\label{Causal-emergence}
		\Delta\mathcal{J}=\mathop{\ln\frac{|\det(WAW^\dagger)|^\frac{1}{k}}{|\det(A)|^\frac{1}{n}}}_{Degeneracy Emergence}+\mathop{\ln\frac{|\det(\Sigma)|^\frac{1}{2n}}{|\det(W\Sigma W^{T})|^\frac{1}{2k}}}_{Determinism Emergence}
	\end{eqnarray}
    $W\in \mathcal{R}^{k\times n}$, $x_t\in \mathcal{R}^{n}$, $y_t\in \mathcal{R}^{k}$. Causal Emergence can be also divided into two terms, and we name them as degeneracy emergence and determinism emergence, separately. Both terms depend on the coarse-grained results of the overall derivative and noise indicators, respectively.
\end{thm}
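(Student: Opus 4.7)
The plan is to invoke the analytical formula for $\mathcal{J}$ from Definition \ref{thm.Effective-information} twice --- once for the micro-dynamics $(A,\Sigma)$ and once for the macro-dynamics $(A_M,\Sigma_M)$ derived in Definition \ref{thm.Macro-dynamical} --- and then form the difference $\mathcal{J}_M-\mathcal{J}_m$ prescribed by Definition \ref{JMsubJm}. Expanding $\mathcal{J}_m$ additively gives $\mathcal{J}_m = \tfrac{1}{n}\ln|\det(A)| + \ln L - \tfrac{1}{2}\ln(2\pi e) - \tfrac{1}{2n}\ln\det(\Sigma)$, and substituting $A_M = WAW^\dagger$ and $\Sigma_M = W\Sigma W^T$ (both in $\mathcal{R}^{k\times k}$) into the same formula at state-space dimension $k$ gives the analogous expression for $\mathcal{J}_M$.

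The key observation --- and the whole purpose of the dimension-averaging normalization introduced in Definition \ref{thm.Effective-information} --- is that after dividing the raw $EI$ by the state-space dimension, the contributions $\ln L$ and $\tfrac{1}{2}\ln(2\pi e)$ appear as additive constants in both $\mathcal{J}_m$ and $\mathcal{J}_M$ that do not depend on $n$ or $k$. These two terms therefore cancel upon subtraction, leaving a residue depending only on the intrinsic parameters $A$, $\Sigma$, and the coarse-graining matrix $W$. This cancellation also clarifies why the constraint on $W$ noted at the end of Subsection \ref{sec:EI} --- that $W$ must be chosen so $do(y_t)\sim U([-L/2,L/2]^k)$ uses the same $L$ as the micro-intervention $do(x_t)\sim U([-L/2,L/2]^n)$ --- is essential; without matching intervention scales the two $\ln L$ terms would not match and the cancellation would fail.

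The remaining step is bookkeeping: grouping the $A$-determinant terms yields $\ln\bigl(|\det(WAW^\dagger)|^{1/k}/|\det(A)|^{1/n}\bigr)$, which we identify as the degeneracy-emergence contribution, and grouping the $\Sigma$-determinant terms yields $\ln\bigl(\det(\Sigma)^{1/(2n)}/\det(W\Sigma W^T)^{1/(2k)}\bigr)$, the determinism-emergence contribution, exactly as in Equation (\ref{Causal-emergence}). All four determinants are well-defined and nonzero: $W$ has full row rank $k$ so that $WW^\dagger = I_k$ and $W\Sigma W^T$ inherits positive definiteness from $\Sigma$, while $\det(A)\neq 0$ and $\det(\Sigma)\neq 0$ follow from the full-rank hypotheses ${\rm rk}(A)={\rm rk}(\Sigma)=n$. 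The main point to verify is conceptual rather than technical: one must confirm that dimension-averaging is precisely the right normalization to eliminate the arbitrary intervention scale $L$ from the emergence indicator, so that $\Delta\mathcal{J}$ becomes an intrinsic functional of $(A,\Sigma,W)$.
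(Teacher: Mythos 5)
Your proposal is correct and follows essentially the same route as the paper's own proof in Appendix A: substitute the $\mathcal{J}$ formula for $(A,\Sigma)$ and $(A_M,\Sigma_M)=(WAW^\dagger, W\Sigma W^T)$, subtract, observe that the $\ln L$ and $\tfrac{1}{2}\ln(2\pi e)$ terms cancel after dimension-averaging, and regroup into the degeneracy and determinism terms. The only quibble is your claim that all four determinants are nonzero --- $\det(WAW^\dagger)$ can vanish for unfavorable $W$ even when ${\rm rk}(A)=n$ (the paper's Lemma \ref{thm.Maximizing-determinant} only gives the lower bound $0$) --- but this does not affect the derivation of the formula itself.
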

The premise for the Theorem \ref{thm.Causal-emergence} to hold is that both macro and micro states can be represented by $L$ as the range of values, and the premise is that the two cannot differ by too much order of magnitude. When the intervention interval size for each dimension of micro and macro variables is consistent and equal to $L$, $L$ can be directly eliminated when calculating causal emergence, so $\Delta\mathcal{J}$ is independent with $L$.

\subsection{Some Bounds related with Causal Emergence}
After inferring the calculation method of $\Delta\mathcal{J}$, we can determine whether there is causal emergence in the macro states after coarse-graining. Next, we will find an optimal matrix $W$ to maximize the degree of causal emergence $\Delta\mathcal{J}$ \cite{Hoel2013,Zhang2022}. To solve this problem, we will firstly discuss the upper bounds related with the two terms, the determinism emergence and degeneracy emergence, in Equation \ref{Causal-emergence}, separately. Here we have lemmas below.

The first lemma is to analyze degeneracy emergence by optimizing $A_M=WAW^\dagger$.
\begin{lem}
	\label{thm.Maximizing-determinant}
For matrix $A$, its eigenvalues can be real numbers or complex numbers, while the range of absolute values(modulus) for $|\det(WAW^\dagger)|$ is bounded:
	\begin{eqnarray}\label{Maximizing-determinant}
		0\leq\left|\det(WAW^\dagger)\right|\leq\prod_{i=1}^{k}|\lambda_i|,
	\end{eqnarray}
	where, $A\in \mathcal{R}^{n\times n}$, $W\in \mathcal{R}^{k\times n}$, $W^\dagger\in \mathcal{R}^{n\times k}$ is the Moore-Penrose generalized inverse matrix of $W$, and $k$ is the rank of the matrix $W$. 
	\begin{enumerate}[(1)]
	\item When the eigenvalues are all real numbers, $|\lambda_1|\geq|\lambda_2|\geq\dots\geq|\lambda_n|\geq 0$ are $n$ absolute values of eigenvalues of matrix $A$ sorted from largest to smallest, $\lambda_1,\lambda_2,\dots,\lambda_n\in R$.
	\item When the eigenvalues contain imaginary numbers, $|\cdot|$ represents the modulus of complex numbers. Note that since the eigenvalues of imaginary numbers appear together with their conjugate complex numbers, both eigenvalues must have equal moduli and must be discarded or retained simultaneously.
	\item When $A$ is a symmetric matrix and $A>0$ is a positive definite matrix, since $\lambda_1\geq \lambda_2\geq\dots\geq \lambda_n\geq0$, we can just write the equation as
	\begin{eqnarray}\label{Maximizing-determinant_symmetric}
		\displaystyle\prod_{i=n-k+1}^{n}\lambda_i\leq\det(WAW^\dagger)\leq\prod_{i=1}^{k}\lambda_i.
	\end{eqnarray}
	\end{enumerate}
\end{lem}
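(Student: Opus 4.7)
The plan is to reduce $\det(WAW^\dagger)$ to the determinant of a principal $k \times k$ compression of $A$ onto a subspace, and then invoke eigenvalue interlacing.

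First, I use the SVD of $W$. Since ${\rm rk}(W)=k$, write $W = U[D,\,0] V^T$ with $U \in \mathcal{R}^{k \times k}$ and $V \in \mathcal{R}^{n \times n}$ orthogonal and $D \in \mathcal{R}^{k \times k}$ diagonal with strictly positive singular values; then $W^\dagger$ is obtained from $V$, an $(n-k) \times k$ zero block appended below $D^{-1}$, and $U^T$. A direct block multiplication gives
\begin{equation*}
WAW^\dagger \;=\; U D \,(V^T A V)_{11}\, D^{-1} U^T,
\end{equation*}
where $(V^T A V)_{11}$ denotes the leading $k \times k$ principal block of the orthogonally similar matrix $V^T A V$. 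Taking the modulus of the determinant on both sides, and using that $U$ is orthogonal and that conjugation by the invertible $D$ preserves determinants, one obtains the key identity
\begin{equation*}
|\det(WAW^\dagger)| \;=\; |\det((V^T A V)_{11})|.
\end{equation*}
Hence the lemma reduces to bounding the modulus of the determinant of a $k \times k$ principal compression of a matrix orthogonally similar to $A$.

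For part (3), where $A$ is symmetric positive definite, $(V^T A V)_{11}$ is itself symmetric positive definite. Its eigenvalues $\mu_1 \ge \mu_2 \ge \cdots \ge \mu_k > 0$ satisfy Cauchy's interlacing inequalities $\lambda_i \ge \mu_i \ge \lambda_{n-k+i}$ for each $i$ (the Poincar\'e separation theorem). Multiplying across $i$ gives both the upper bound $\prod_{i=1}^k \lambda_i$ and the lower bound $\prod_{i=n-k+1}^n \lambda_i$ in \eqref{Maximizing-determinant_symmetric}. The trivial inequality $0 \le |\det(\cdot)|$ supplies the zero lower bound in \eqref{Maximizing-determinant}.

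For parts (1) and (2), I would further reduce $A$ to its (real) Schur form $A = QTQ^T$, where $T$ is (block) upper triangular with the eigenvalues of $A$ on its diagonal, using $2 \times 2$ blocks to represent complex conjugate pairs; orthogonality of $Q$ preserves the reduction above with $V$ replaced by $Q^T V$. The hard part is that for non-normal $A$, Cauchy's interlacing no longer controls arbitrary principal compressions, so the bound $\prod_{i=1}^k |\lambda_i|$ cannot come from eigenvalue monotonicity of the submatrix alone. The approach I expect to pursue is to show that $|\det((V^T T V)_{11})|$ is extremized when the first $k$ columns of $V$ align with a dominant invariant subspace of $T$, so that the compressed block becomes (block) triangular with the top $k$ eigenvalues on its diagonal; the requirement in the complex case that each $2 \times 2$ real-Schur block representing a conjugate pair be retained or discarded as a unit is exactly the parity condition in the statement. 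Justifying this extremal alignment, together with the attendant bookkeeping for the $2 \times 2$ real-Schur blocks, is where I expect the main effort to lie.
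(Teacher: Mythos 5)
Your SVD reduction to $|\det(WAW^\dagger)|=|\det((V^TAV)_{11})|$ is correct, and your proof of part (3) by Cauchy interlacing on the principal compression of the symmetric matrix $V^TAV$ is complete and rigorous --- arguably cleaner than the paper's own argument, which instead diagonalizes $A=V\Lambda V^{-1}$, writes $\det(\tilde W\Lambda\tilde W^\dagger)$ with $\tilde W=WV$ as a Cauchy--Binet-type sum $\sum_m\gamma_m\prod_{\lambda_i\in L_m}\lambda_i$ over $k$-element subsets of eigenvalues, and bounds this as a convex combination using $\gamma_m\geq 0$, $\sum_m\gamma_m=1$.

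For parts (1) and (2), however, you have only a plan, not a proof, and the obstruction you identified is fatal rather than technical: the inequality is simply false for non-normal $A$, so the ``extremal alignment'' step you hope to establish for the real Schur form cannot hold in general. Concretely, take $n=2$, $k=1$, $A=\left(\begin{smallmatrix}1&10\\0&2\end{smallmatrix}\right)$ (diagonalizable, eigenvalues $1$ and $2$) and $W=(1,1)/\sqrt{2}$, so that $W^\dagger=W^T$; then $WAW^\dagger=\tfrac{1}{2}(1+10+0+2)=6.5>2=|\lambda_1|$. The statement does hold for normal $A$: there the compression is $C\Lambda C^{*}$ with $CC^{*}=I_k$, and Cauchy--Binet gives $\det(C\Lambda C^{*})=\sum_S|\det(C_{:,S})|^2\prod_{i\in S}\lambda_i$ with $\sum_S|\det(C_{:,S})|^2=1$, a genuine convex combination of products of $k$ eigenvalues, whence the bound. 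This is also precisely where the paper's own proof silently uses normality: the nonnegativity of its coefficients $\gamma_m$ requires $V^{-1}W^\dagger$ to be the true Moore--Penrose inverse of $\tilde W=WV$, which holds when $V$ is orthogonal but not for a general eigenvector matrix. So the honest way to finish your argument is to add a normality (or orthogonal-diagonalizability) hypothesis, under which your SVD reduction followed by the Cauchy--Binet convex-combination identity disposes of parts (1) and (2) immediately; no amount of bookkeeping with the $2\times 2$ Schur blocks will rescue the claim as stated.
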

The proof of the lemma can be referred to in \ref{appendixAM}. In addition, we can verify this lemma through numerical simulations. For example, for matrix $A$ with eigenvalues of $\lambda_1=2.540, \lambda_2=1.380, \lambda_3=-0.4899, \lambda_4=0.1149$, we randomly generate the elements of the matrix $W$. From Fig.\ref{fig:generateW}a We can see that there is an upper bound $\lambda_1\lambda_2$ on the corresponding $\det(A_M)$ value of each $W$. If $A$ is also randomly generated, we can see that all the scatter points are below the diagonal line $y=x$ in Fig.\ref{fig:generateW}b which means Lemma \ref{thm.Maximizing-determinant} is satisfied.

After the lemma about $|\det(A_M)|=|\det(WAW^\dagger)|$, the next lemma is about $\det(\Sigma_M)=\det(W\Sigma W^T)$, which is related with the determinism emergence.

\begin{lem}
     \label{thm.SigmadeterminantandSW}
     We define the singular values of $W$ are $s_{W,1}\geq s_{W,2}\geq \dots \geq s_{W,k}$, then $\det(W\Sigma W^{T})$ satisfies
\begin{eqnarray}
	\label{eq:singulars}
 \left(\prod_{i=1}^{k}s_{W,i}\kappa^\frac{1}{2}_{n-i+1}\right)^\frac{1}{k}\leq\det(W\Sigma W^{T})^\frac{1}{2k}\leq\left(\prod_{i=1}^{k}s_{W,i}\kappa^\frac{1}{2}_{i}\right)^\frac{1}{k},
\end{eqnarray}
$\kappa_1\geq\kappa_2\geq\dots\geq\kappa_n$ are the eigenvalues of $\Sigma$. When $\Sigma=\sigma^2I$, the equal sign holds and
\begin{eqnarray}
   \det(W\Sigma W^{T})^\frac{1}{2k}=\sigma\left(\prod_{i=1}^{k}s_{W,i}\right)^\frac{1}{k}.
\end{eqnarray}
\end{lem}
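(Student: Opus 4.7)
The plan is to separate the contribution of $W$ (through its singular values) from that of $\Sigma$ (through its eigenvalues) by means of the singular value decomposition of $W$, and then to invoke the Cauchy interlacing theorem for principal submatrices of a symmetric positive definite matrix.

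First I would apply the SVD of $W$ in the form $W = U_W\,[\,S_W\;\;\mathbf{0}\,]\,V_W^T$, where $U_W\in\mathcal{R}^{k\times k}$ and $V_W\in\mathcal{R}^{n\times n}$ are orthogonal, $S_W=\mathrm{diag}(s_{W,1},\ldots,s_{W,k})$, and $\mathbf{0}$ is a $k\times(n-k)$ zero block; the assumption ${\rm rk}(W)=k$ guarantees that $S_W$ is invertible. Setting $M := V_W^T \Sigma V_W$, which is symmetric positive definite and has the same eigenvalues $\kappa_1\geq\cdots\geq\kappa_n$ as $\Sigma$, a direct block multiplication gives
\begin{equation*}
W\Sigma W^T \;=\; U_W\, S_W\, M_{11}\, S_W\, U_W^T,
\end{equation*}
where $M_{11}$ denotes the top-left $k\times k$ principal submatrix of $M$. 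Since $U_W$ is orthogonal, taking determinants yields the clean factorization $\det(W\Sigma W^T)=\bigl(\prod_{i=1}^{k} s_{W,i}\bigr)^2\,\det(M_{11})$, which completely separates the role of $W$ from that of $\Sigma$.

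The core step is then to bound $\det(M_{11})$. By the Cauchy interlacing theorem applied to the symmetric matrix $M$, the eigenvalues $\nu_1\geq\cdots\geq\nu_k$ of its top-left $k\times k$ principal submatrix satisfy $\kappa_{n-k+i}\le \nu_i\le \kappa_i$ for each $i=1,\ldots,k$. Because $\Sigma\succ 0$ every term is positive, so multiplying these inequalities gives $\prod_{i=1}^{k}\kappa_{n-i+1}\le \det(M_{11})\le \prod_{i=1}^{k}\kappa_i$. Substituting into the factorization above and extracting the $1/(2k)$-th root produces the two-sided bound stated in the lemma. For the isotropic case $\Sigma=\sigma^2 I$ one has $M=\sigma^2 I$, hence $M_{11}=\sigma^2 I_k$, which collapses both bounds to the claimed equality $\det(W\Sigma W^T)^{1/(2k)}=\sigma\bigl(\prod_{i=1}^{k} s_{W,i}\bigr)^{1/k}$.

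The main obstacle I expect is the reduction step rather than the interlacing inequality itself: one has to verify carefully that the orthogonal transformation by $V_W$ preserves the spectrum of $\Sigma$ and that the block-multiplication by $[\,S_W\;\;\mathbf{0}\,]$ on the left and its transpose on the right correctly extracts $S_W M_{11} S_W$ from the transformed matrix. Once this alignment is secured, the Cauchy interlacing inequalities deliver the sharp indices displayed in the lemma, and the isotropic case $\Sigma=\sigma^2 I$ provides a convenient sanity check that both bounds are tight.
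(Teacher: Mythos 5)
Your proof is correct, and it is worth noting that the paper itself never supplies an analytic proof of this lemma: the text only validates the inequality numerically with randomly generated $\Sigma$ and $W$ (Fig.~1c), and neither appendix contains an argument for it. Your route --- writing $W=U_W\,[\,S_W\;\;\mathbf{0}\,]\,V_W^T$, observing that $W\Sigma W^T=U_W S_W M_{11} S_W U_W^T$ with $M_{11}$ the leading $k\times k$ principal submatrix of $M=V_W^T\Sigma V_W$, and then applying Cauchy interlacing $\kappa_{n-k+i}\le\nu_i\le\kappa_i$ to bound $\det(M_{11})$ between $\prod_{i=1}^k\kappa_{n-i+1}$ and $\prod_{i=1}^k\kappa_i$ --- is sound at every step: the block multiplication does extract $S_W M_{11} S_W$, the orthogonal conjugation preserves the spectrum of $\Sigma$, rank $k$ makes $S_W$ invertible, and the product of the interlacing bounds (all terms positive since $\Sigma\succ0$) reassembles exactly into the stated two-sided estimate after taking the $1/(2k)$-th root. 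The isotropic check $\Sigma=\sigma^2 I$ also comes out correctly since then $M_{11}=\sigma^2 I_k$. In short, your argument supplies a rigorous proof where the paper offers only simulation evidence, and it does so with standard tools (SVD plus interlacing) consistent with the matrix-analytic style the authors use elsewhere (e.g.\ in their treatment of $\det(WAW^\dagger)$).
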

In Fig.\ref{fig:generateW}c, we validate the inequalities using randomly generated $\Sigma$ and $W$. Therefore, we can control the range of $\det(W\Sigma W^{T})$ by adjusting the magnitude of the singular values of $W$. We further can adjust the magnitude of the singular value of $W$ to weaken the noise and enhance the causal emergence as we know in Lemma \ref{thm.SigmadeterminantandSW}. As shown in Fig.\ref{fig:generateW}d, the smaller the mean singular values of $W$ we generate, the greater the degree of causal emergence $\Delta\mathcal{J}$.

\subsection{Causal Emergence Maximization under Given Information Loss}\label{sec:Maximizing-causal-emergence}
In this subsection, we will officially discuss the problem of causal emergence maximization. Before proceeding, we will introduce a condition on uncertainty elimination to constrain $W$, such that zeros in denominator are excluded. 

\subsubsection{The Constraint on the uncertainty elimination by coarse-graining}

According to Lemma \ref{thm.SigmadeterminantandSW}, when the singular value of $W$, $s_{W,1}\to 0$, $\det(W\Sigma W^T)\to 0$. This means if $\phi(x_t)=Wx_t$ is a completely unconstrained map, the values of $x_t$ and the noise $\epsilon_t$ may be scaled in an unbounded manner and the information entropy $H(y_t)\to-\infty$, leading to $y_t$ having an excessively narrow range. To prevent the macro variables from being excessively compressed by the coarse-graining map which may result in excessive information content, it is essential to impose the following restriction,
\begin{eqnarray}\label{SetWSigEntropy}
	\frac{1}{n}H(p(x_t))-\frac{1}{k}H(p(y_t))\leq\eta,
\end{eqnarray}
where, $n$ and $k$ are the dimensionalities of micro- and macro- variables, and $\eta>0$ is a given constant, and $H(p(x_t))=(n/2)\ln(2\pi e)+(1/2)\ln(\det(\Sigma))$ and $H(p(y_t))=(k/2)\ln(2\pi e)+(1/2)\ln(\det(W\Sigma W^T))$. The inequality can be understood as the dimension averaged uncertainty eliminated (or information gained) by the coarse-graining strategy is under a given threshold $\eta$. According to the inequality (\ref{SetWSigEntropy}), we can further get that
\begin{eqnarray}\label{SetWSig}
	\det(W\Sigma W^{T})^\frac{1}{k}\geq\epsilon\det(\Sigma)^\frac{1}{n},
\end{eqnarray}
where, $W^{T}$ represents the transpose matrix of matrix $W$ and $\epsilon=\exp(-2\eta)$. The covariance matrix affects the range of values of random variables, the limitations here can also ensure the effectiveness of the $do$ intervention scope $L$ mentioned in Section \ref{sec:EI}. It will be clear in the subsequent sections that this constraint could also significantly influence the scope of causal emergence. 

With the help of Lemmas \ref{thm.Maximizing-determinant} and \ref{thm.SigmadeterminantandSW}, we can obtain the optimal solution for causal emergence $\Delta\mathcal{J}^{*}$.

\begin{figure}[htbp]
	\centering
	\begin{minipage}[c]{0.4\textwidth}
		\centering
		\includegraphics[width=1\textwidth]{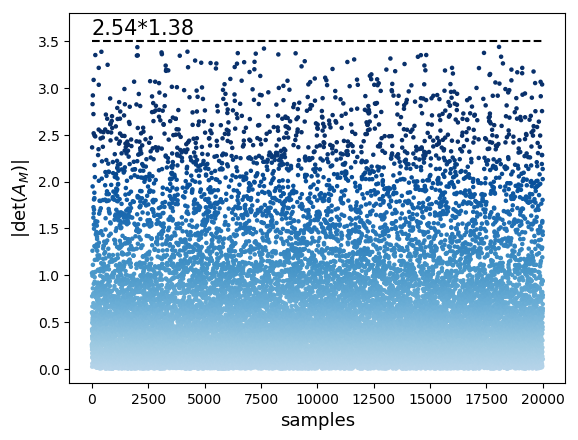}
		\centerline{(a)}
	\end{minipage}
	\begin{minipage}[c]{0.4\textwidth}
		\centering
		\includegraphics[width=1\textwidth]{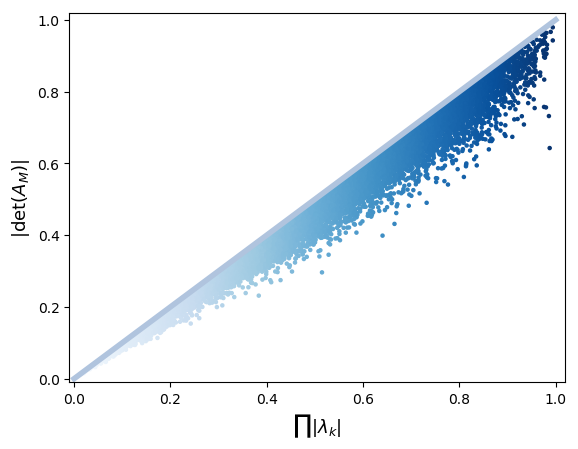}
		\centerline{(b)}
	\end{minipage}
	\begin{minipage}[c]{0.4\textwidth}
		\centering
		\includegraphics[width=1\textwidth]{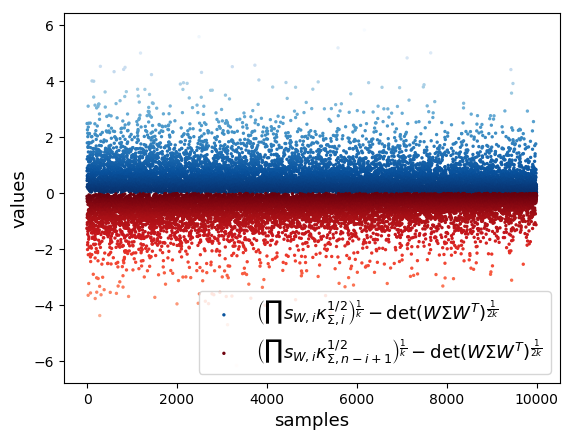}
		\centerline{(c)}
	\end{minipage}
        \begin{minipage}[c]{0.4\textwidth}
		\centering
		\includegraphics[width=1\textwidth]{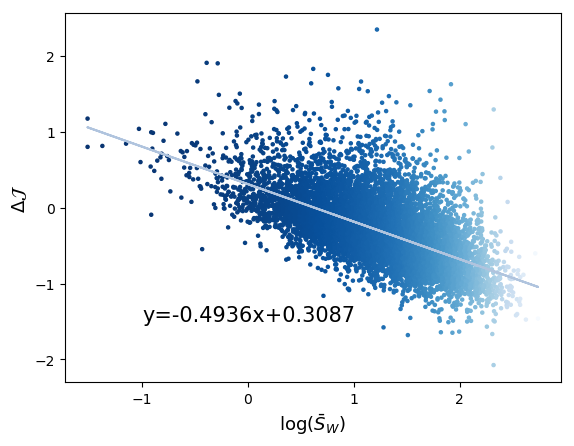}
		\centerline{(d)}
	\end{minipage}
	\caption{Experimental results of the simulation about the lemma and the theorem presented in Section 3. (a) For matrix $A$ with eigenvalues of $\lambda_1=2.540, \lambda_2=1.380, \lambda_3=-0.4899, \lambda_4=0.1149$, we randomly generate matrix $W$. We can see that there is an upper bound $\lambda_1\lambda_2$ on the corresponding $\det(A_M)$ value of each $W$. (b) If $A$ is also randomly generated, we can see the scatter plot that conforms to Lemma.\ref{thm.Maximizing-determinant}. (c) We validate the inequality using randomly generated $\Sigma$ and $W$. We can control the range of $\det(W\Sigma W^{T})$ by adjusting the magnitude of the singular value of $W$. (d) The smaller the mean singular value of $W$ we generate, the greater the degree of causal emergence $\Delta\mathcal{J}$.}
	\label{fig:generateW}
\end{figure}

\begin{thm}
	\label{thm.Maximizing-EI}
	For the linear stochastic iteration systems like Equation (\ref{MicroDynamics}), after coarse-graining $y_t=\phi(x_t)=Wx_t$, $W\in \mathcal{R}^{k\times n}$, under the constraint of Equation (\ref{SetWSig}), the maximum of the degree of causal emergence that the system can achieve is
	\begin{eqnarray}\label{maxCE}
	\Delta\mathcal{J}^{*}=\frac{1}{k}\sum_{i=1}^{k}\ln\displaystyle|\lambda_i|-\frac{1}{n}\sum_{i=1}^{n}\ln\displaystyle|\lambda_i|+\eta
	\end{eqnarray}
    $|\lambda_1|\geq|\lambda_2|\geq\dots\geq|\lambda_n|\geq 0$ are $n$ modulus of eigenvalues of matrix $A$ sorted from the largest to the smallest, $\lambda_1,\lambda_2,\dots,\lambda_n\in C$.
\end{thm}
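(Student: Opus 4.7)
The plan is to split the expression for $\Delta\mathcal{J}$ from Theorem \ref{thm.Causal-emergence} into its two named pieces and bound each separately, then produce a witness $W$ achieving both bounds at once. Writing
\begin{equation*}
\Delta\mathcal{J} \;=\; \underbrace{\ln\frac{|\det(WAW^{\dagger})|^{1/k}}{|\det(A)|^{1/n}}}_{T_{\mathrm{deg}}} \;+\; \underbrace{\ln\frac{|\det(\Sigma)|^{1/(2n)}}{|\det(W\Sigma W^{T})|^{1/(2k)}}}_{T_{\mathrm{det}}},
\end{equation*}
one sees that $T_{\mathrm{det}}$ is controlled directly by the uncertainty-elimination constraint, since rearranging inequality (\ref{SetWSig}) yields $\det(W\Sigma W^{T})^{-1/(2k)}\det(\Sigma)^{1/(2n)} \le \epsilon^{-1/2}=\exp(\eta)$, so $T_{\mathrm{det}} \le \eta$.

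For $T_{\mathrm{deg}}$, I would invoke Lemma \ref{thm.Maximizing-determinant} directly: $|\det(WAW^{\dagger})| \le \prod_{i=1}^{k}|\lambda_i|$, where the $|\lambda_i|$ are the moduli of the eigenvalues of $A$ sorted in decreasing order. Taking logarithms and dividing by $k$ gives $T_{\mathrm{deg}} \le \frac{1}{k}\sum_{i=1}^{k}\ln|\lambda_i| - \frac{1}{n}\sum_{i=1}^{n}\ln|\lambda_i|$. Adding the two bounds yields the claimed upper bound in (\ref{maxCE}), so what remains is the construction of a matrix $W^{\ast}$ that saturates both inequalities simultaneously.

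For the achievability part, I would build $W^{\ast}$ in two stages. First, choose any basis of the (generalized) eigenspace of $A$ associated with the $k$ eigenvalues of largest modulus $\lambda_1,\dots,\lambda_k$; let $P\in\mathcal{R}^{k\times n}$ be the matrix whose rows form this basis in the coordinates in which $A$ is brought (via similarity) to block-triangular form with the dominant block in the upper-left corner. This choice forces $PAP^{\dagger}$ to equal that dominant block, so $|\det(PAP^{\dagger})|=\prod_{i=1}^{k}|\lambda_i|$, saturating Lemma \ref{thm.Maximizing-determinant}. Second, I would scale $P$ by an isotropic factor $c>0$, setting $W^{\ast}=cP$; since $W^{\ast}A(W^{\ast})^{\dagger}=PAP^{\dagger}$ is scale-invariant, the degeneracy bound is preserved, while $\det(W^{\ast}\Sigma (W^{\ast})^{T})^{1/k}=c^{2}\det(P\Sigma P^{T})^{1/k}$ can be tuned by choosing $c$ so that equality holds in (\ref{SetWSig}). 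This proves $\Delta\mathcal{J}^{\ast}$ is attained.

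The step I expect to require the most care is the eigenspace construction for the degeneracy term when $A$ has complex eigenvalues or non-trivial Jordan blocks: one must verify that the chosen $W$ really corresponds to an invariant subspace so that $WAW^{\dagger}$ has the desired spectrum (recall from Lemma \ref{thm.Maximizing-determinant} that complex conjugate pairs must be kept or discarded together). Once the invariant subspace selection is made with that parity restriction and the scaling $c$ is chosen by the closed-form requirement $c^{2k}\det(P\Sigma P^{T})=\epsilon^{k}\det(\Sigma)^{k/n}$, both bounds become equalities, and combining them gives exactly the right-hand side of (\ref{maxCE}).
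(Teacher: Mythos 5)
Your proposal is correct and follows essentially the same route as the paper: the same decomposition into degeneracy and determinism emergence, the same use of Lemma \ref{thm.Maximizing-determinant} for the first term and of constraint (\ref{SetWSig}) for the second, and the same achievability construction via a $W$ of the form $(\tilde{W}_k, O_{k\times(n-k)})V^{-1}$ spanning the dominant invariant subspace. Your explicit isotropic-scaling argument (exploiting the scale invariance of $WAW^{\dagger}$ to saturate the entropy constraint independently) and your flagging of the conjugate-pair parity issue are slightly more careful than the paper's write-up, but they are refinements of the same proof rather than a different one.
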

\begin{proof}
	We will discuss the maximization of causal emergence from determinism emergence and degeneracy emergence, separately.
\begin{enumerate}[(1)]
	\item Degeneracy emergence ($\Delta\mathcal{J}_1$)
	
	Based on Theorem \ref{thm.Maximizing-determinant}, we know that
	\begin{eqnarray}\label{maxDegeneracy}
		\begin{aligned}
\Delta\mathcal{J}_1&=\ln\frac{|\det(WAW^\dagger)|^\frac{1}{k}}{|\det(A)|^\frac{1}{n}}\leq\ln\frac{\left(\displaystyle\prod_{i=1}^k|\lambda_i|\right)^\frac{1}{k}}{\left(\displaystyle\prod_{i=1}^n|\lambda_i|\right)^\frac{1}{n}}=\frac{1}{k}\sum_{i=1}^{k}\ln\displaystyle|\lambda_i|-\frac{1}{n}\sum_{i=1}^{n}\ln\displaystyle|\lambda_i|.
		\end{aligned}
	\end{eqnarray}
    If and only if the maximum $k$ eigenvalues are preserved, the equality holds. At this point, $A_M=WAW^\dagger=(WV)\Lambda(WV)^\dagger=\tilde{W}\Lambda\tilde{W}^\dagger$, $\Lambda={\rm diag}(\lambda_1, \dots, \lambda_n)\in \mathcal{R}^{n\times n}$, it is satisfied that $\tilde{W}=(\tilde{W}_k,O_{k\times{(n-k)}})$, $\tilde{W}_k\in \mathcal{R}^{k\times k}$ can be any invertible matrix. $V=(v_1,\dots,v_n)$, $v_i$ is the eigenvector corresponding to the eigenvalues $\lambda_i$ of matrix $A$, $i=1,\dots,n$, $\lambda_1\geq\dots\geq\lambda_n$. Since $\tilde{W}=WV$, We can obtain the expression for $W$ as
    \begin{eqnarray}\label{W*}
    		W=(\tilde{W}_k,O_{k\times{(n-k)}})V^{-1}
    \end{eqnarray}
	\item Determinism emergence ($\Delta\mathcal{J}_2$)

	Based on Equation (\ref{Causal-emergence}), and the inequalities (\ref{eq:singulars}), (\ref{SetWSigEntropy}) and (\ref{SetWSig}), we know that 
	\begin{eqnarray}\label{maxdeterminism}
		\begin{aligned}
				\Delta\mathcal{J}_2&=\ln\frac{|\det(\Sigma)|^\frac{1}{2n}}{|\det(W\Sigma W^{T})|^\frac{1}{2k}}=\frac{1}{2}\ln\frac{|\det(\Sigma)|^\frac{1}{n}}{|\det(W\Sigma W^{T})|^\frac{1}{k}}\\
			&\leq\frac{1}{2}\ln\frac{|\det(\Sigma)|^\frac{1}{n}}{|\det(\Sigma)|^\frac{1}{n}\epsilon}=\frac{1}{2}\ln\frac{1}{\epsilon}=\eta
		\end{aligned}
	\end{eqnarray}
when the information loss reaches the upper bound, the determinism emergence reaches the maximum. Determinism is enhanced while ensuring that randomness is not excessively reduced.

\end{enumerate}
	By taking the maximum value of $\Delta\mathcal{J}_1$ and $\Delta\mathcal{J}_2$, we can obtain
	\begin{eqnarray}\label{maxCE12}
		\begin{aligned}
			\Delta\mathcal{J}^{*}&=\Delta\mathcal{J}_1^*+\Delta\mathcal{J}_2^*=\ln\frac{\left(\displaystyle\prod_{i=1}^k|\lambda_i|\right)^\frac{1}{k}}{\left(\displaystyle\prod_{i=1}^n|\lambda_i|\right)^\frac{1}{n}}+\frac{1}{2}\ln\frac{1}{\epsilon}=\frac{1}{k}\sum_{i=1}^{k}\ln\displaystyle|\lambda_i|-\frac{1}{n}\sum_{i=1}^{n}\ln\displaystyle|\lambda_i|+\eta
		\end{aligned}
	\end{eqnarray}
\end{proof}

Suppose the solution set of the optimal coarse-grained matrix $W^*$ which satisfying $\Delta\mathcal{J}(W^*)=	\Delta\mathcal{J}^*$, is $\mathcal{W}^*$. Although it is difficult for us to find all the elements in set $\mathcal{W}^*$, we will give methods for calculating special solutions for different types of matrices in later sections.
\subsubsection{Optimal solution set $\mathcal{W}^*$}
After optimizing determinism emergence and degeneracy emergence, we can find two solution sets corresponding to the optimal solutions. Thus, the intersection of the two solution sets is the solution set of $W$ corresponding to the maximum degree of causal emergence.
\begin{corollary}
	\label{thm.setofW}
 When $\varepsilon_t\sim N_\mathcal{N}(0,\Sigma)$ and the degree of causal emergence reaches its maximum value, $W$ needs to satisfy
    \begin{eqnarray}\label{WSimas0}
		\begin{cases}
  WV=(\tilde{W}_k,O_{k\times{(n-k)}}),\\
  \det{(W\Sigma W^{T})}^\frac{1}{k}=\epsilon\det{(\Sigma)}^\frac{1}{n},
        \end{cases}
    \end{eqnarray}
	in which $V=(v_1,\dots,v_n)$, $v_i$ is the eigenvector corresponding to the eigenvalues $\lambda_i$ of matrix $A$, $i=1,\dots,n$, $\lambda_1\geq\dots\geq\lambda_n$. $\tilde{W}_k\in \mathcal{R}^{k\times k}$ can be any invertible matrix. The solution set of the optimal coarse-grained matrix $W^*$ satisfying $\Delta\mathcal{J} (W^*)=\Delta\mathcal{J}^*$ as $\epsilon=\exp(-2\eta)$ is
	\begin{eqnarray}\label{WSimas1}
		\mathcal{W}^*=\left\{W\bigg| W=(\tilde{W}_k,O_{k\times{(n-k)}})V^{-1}, \det(W\Sigma W^{T})=\epsilon\det(\Sigma)^\frac{k}{n}\right\}.
	\end{eqnarray}
\end{corollary}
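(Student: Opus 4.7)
The plan is to derive this corollary by intersecting the two equality loci that emerged in the proof of Theorem~\ref{thm.Maximizing-EI}. The key structural fact is that $\Delta\mathcal{J}(W) = \Delta\mathcal{J}_1(W) + \Delta\mathcal{J}_2(W)$, where $\Delta\mathcal{J}_1$ depends on $W$ only through $|\det(WAW^\dagger)|$ (degeneracy emergence) and $\Delta\mathcal{J}_2$ depends on $W$ only through $\det(W\Sigma W^T)$ (determinism emergence). Since each summand was separately bounded above in Theorem~\ref{thm.Maximizing-EI}, any $W$ achieving $\Delta\mathcal{J}(W) = \Delta\mathcal{J}^*$ must saturate both bounds simultaneously; conversely, any $W$ lying in the intersection of the two equality loci attains $\Delta\mathcal{J}^*$. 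So the corollary reduces to identifying each locus and taking their intersection.

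I would first extract the equality condition for the degeneracy term. Lemma~\ref{thm.Maximizing-determinant} shows $|\det(WAW^\dagger)| \leq \prod_{i=1}^{k}|\lambda_i|$, with equality exactly when the $k$ largest (in modulus) eigendirections of $A$ are preserved by $W$. Diagonalizing $A = V\Lambda V^{-1}$ with the eigenvalues ordered $|\lambda_1|\geq\cdots\geq|\lambda_n|$, the argument already carried out in the proof of Theorem~\ref{thm.Maximizing-EI} translates this geometric condition into the algebraic form $WV = (\tilde W_k,\,O_{k\times(n-k)})$ for some invertible $\tilde W_k\in\mathcal{R}^{k\times k}$; solving for $W$ yields $W = (\tilde W_k,\,O_{k\times(n-k)})V^{-1}$, which is the first clause of (\ref{WSimas0}).

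I would then handle the determinism term. The bound $\Delta\mathcal{J}_2 \leq \eta$ derived in (\ref{maxdeterminism}) is obtained solely from the prescribed constraint (\ref{SetWSig}), so equality forces (\ref{SetWSig}) to be tight, namely $\det(W\Sigma W^T)^{1/k} = \epsilon\,\det(\Sigma)^{1/n}$. Raising both sides to the $k$-th power gives the scalar normalization $\det(W\Sigma W^T) = \epsilon^{k}\det(\Sigma)^{k/n}$ appearing in (\ref{WSimas1}), which is the second clause of (\ref{WSimas0}).

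The main obstacle is to verify that these two loci are not disjoint, so that $\mathcal{W}^*$ is actually nonempty and $\Delta\mathcal{J}^*$ is attained. The first condition constrains only the row space of $W$ (it forces the last $n-k$ columns of $WV$ to vanish while letting $\tilde W_k$ range over the invertible $k\times k$ matrices), whereas the second condition is a single positive scalar normalization. Rescaling $\tilde W_k \mapsto c\,\tilde W_k$ for $c>0$ multiplies $\det(W\Sigma W^T)$ by $c^{2k}$ and does not disturb the first condition, so for any admissible direction a unique $c>0$ enforces the second condition. This rescaling compatibility argument shows that $\mathcal{W}^*$ is nonempty and matches the closed-form parametrization displayed in the corollary.
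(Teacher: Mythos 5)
Your proposal is correct and follows essentially the same route as the paper, which simply points back to the proof of Theorem~\ref{thm.Maximizing-EI} and takes the intersection of the two equality loci (degeneracy saturated iff $WV=(\tilde W_k,\,O_{k\times(n-k)})$, determinism saturated iff the constraint (\ref{SetWSig}) is tight). Your added rescaling argument for nonemptiness is a worthwhile supplement the paper omits, and your $k$-th-power computation $\det(W\Sigma W^T)=\epsilon^{k}\det(\Sigma)^{k/n}$ is the consistent form---the paper's display (\ref{WSimas1}) writes $\epsilon$ where $\epsilon^{k}$ is needed to match (\ref{WSimas0}).
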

We can find the proof process in the previous Subsection \ref{sec:Maximizing-causal-emergence}.

\subsubsection{A case study of $\mathcal{W}^*$}
To understand the physical meaning of the solution set, we can use a simple example in the three-dimensional space to visualize our optimal solution set. In specific cases, when $k=2, n=3$, the projection of $W$ in three-dimensional space is a circle. In this case, matrix $W=(w_1^{T},w_2^{T})^{T}\in \mathcal{R}^{2\times 3}$ can be split into two row-vectors, $w_i=(w_{i1},w_{i2},w_{i3}),i=1,2$, in $\mathcal{R}^3$, The eigenvector matrix $V=(v_1,v_2,v_3)\in \mathcal{R}^{3\times 3}$ can be regarded as the combining of three vectors, $v_j=(v_{1j},v_{2j},v_{3j})^T\in \mathcal{R}^3,j=1,2,3$. To visualize the solution set of $W$, we introduce the following case as $\epsilon=\exp(-2\eta)$.
\begin{exam}
	\label{thm.setofW3d}
In order to present the results more intuitively, we specify $w_1w_2^{T}=0$, $\Sigma=\sigma^2I_3$. According to Equation (\ref{WSimas1}), we can get
\begin{eqnarray}\label{Wsetelement}
		\begin{cases}
w_1v_3=0,\\
w_2v_3=0,\\
  (w_1w_1^{T})(w_2w_2^{T})=\epsilon,
        \end{cases}
\end{eqnarray}
 Although $W\in \mathcal{R}^{2\times 3}$ is in a six-dimensional space, we can get the solution set of $w_1$ for specified $w_2$
    \begin{eqnarray}\label{WSimas0R3}
		\begin{cases}
w_{11}v_{13}+w_{12}v_{23}+w_{13}v_{33}=0,\\
  w_{11}^2+w_{12}^2+w_{13}^2=R^2,
        \end{cases}
    \end{eqnarray}
in which $R^2=\epsilon/(w_2w_2^{T})$, $R$ is the radius of the sphere. Similarly, we can also obtain the range of values for $w_2$. 
\end{exam}
\begin{proof}
To meet the conditions $WV=(Wv_1,Wv_2,Wv_3)=(\tilde{W}_k,O_{k\times{(n-k)}})$, we only need to specify $w_1v_3=w2v_3=0$. This means that $w_1$ and $w_2$ are contained in the plane perpendicular to $v_3$ and passing through the origin, which is $ w_{11}v_{13}+w_{12}v_{23}+w_{13}v_{33}=0$. When $w_1$ and $w_2$ are in this plane, the degeneracy emergence $\Delta\mathcal{J}_1$ reaches its minimum value.

For determinism emergence $\Delta\mathcal{J}_2$,
 \begin{eqnarray}\label{WSimasr3}
 \begin{aligned}
     \det{(W\Sigma W^{T})}&=(w_1\Sigma w_1^{T})(w_2\Sigma w_2^{T})-(w_1\Sigma w_2^{T})(w_2\Sigma w_1^{T})\\
     _{(\Sigma=\sigma^2 I_3)}&=\sigma^4\left[(w_1 w_1^{T})(w_2 w_2^{T})-(w_1 w_2^{T})(w_2 w_1^{T})\right]\\
     _{w_1w_2^{T}=0}&=\sigma^4\left[(w_1 w_1^{T})(w_2 w_2^{T})\right]
    =\epsilon\det{(\Sigma)}^\frac{k}{n} \\
    &= \epsilon\sigma^4,
 \end{aligned}
\end{eqnarray}
When $w_2$ is fixed, the solution set of $w_1$ is a sphere in three-dimensional space $w_{11}^2+w_{12}^2+w_{13}^2=\mathcal{R}^2,\mathcal{R}^2=\epsilon/(w_2w_2^{T})$. If we fix $w_1$, the result is the same. 
\end{proof}
From this example, we can derive the following proposition
\begin{proposition}
In three-dimensional space $\mathcal{R}^3$, if the noise is a white noise sequence $\Sigma=\sigma^2I_3$ and $w_1$ and $w_2$ are perpendicular to each other, when causal emergence $\Delta\mathcal{J}(W^*)=\Delta\mathcal{J}^*$, the solution set of $w_i$ for $i=1,2$ is the intersection of a plane and a sphere in the space, which is a circle.
\end{proposition}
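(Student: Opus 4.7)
The plan is to combine the algebraic characterization of the optimal set $\mathcal{W}^*$ from Corollary \ref{thm.setofW} with the simplifications arising from $\Sigma = \sigma^2 I_3$ and the prescribed orthogonality $w_1 w_2^{T} = 0$, and then reinterpret the resulting constraints geometrically in $\mathcal{R}^3$. Example \ref{thm.setofW3d} already carries out most of the key computations, so the main task is to recast them as a clean geometric statement about $w_1$ and $w_2$.

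First I would specialize Corollary \ref{thm.setofW} to $k=2$, $n=3$. Writing $V=(v_1,v_2,v_3)$ with eigenvalues ordered by decreasing modulus, the block equation $WV=(\tilde{W}_k,O_{k\times(n-k)})$ forces the third column of $WV$ to vanish, i.e., $w_1 v_3 = w_2 v_3 = 0$. Hence both row vectors lie in the two-dimensional linear subspace $\Pi\subset\mathcal{R}^3$ through the origin normal to $v_3$; this produces the plane that appears in the statement.

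Next I would use $\Sigma=\sigma^2 I_3$ and $w_1 w_2^{T}=0$ to simplify the determinant constraint. Expanding $\det(W\Sigma W^{T}) = \sigma^4\det(WW^{T})$ as a $2\times 2$ Gram determinant and dropping the cross term $(w_1 w_2^{T})^2 = 0$ reduces $\det(W\Sigma W^{T})^{1/k} = \epsilon\det(\Sigma)^{1/n}$ to $(w_1 w_1^{T})(w_2 w_2^{T}) = \epsilon$. Fixing a nonzero $w_2$, the admissible $w_1$ must therefore satisfy $\|w_1\|^2 = R^2$ with $R^2 = \epsilon/\|w_2\|^2$; that is, $w_1$ lies on the sphere $S_R$ of radius $R$ centered at the origin.

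Finally, the solution locus of $w_1$ (with $w_2$ held as a parameter) is the intersection $\Pi \cap S_R$, and since $\Pi$ passes through the center of $S_R$, this intersection is a great circle of $S_R$ of radius $R$, which is exactly the claim. By symmetry of the roles of $w_1$ and $w_2$ in the defining equations, the same conclusion holds for $w_2$ with $w_1$ fixed. The only real obstacle I foresee is a degenerate corner case in which the Gram-determinant reduction could collapse (e.g., $w_1=0$ or $w_2=0$), but the product constraint $\|w_1\|^2\|w_2\|^2 = \epsilon > 0$ rules these out, so the geometric picture is intact throughout the feasible region.
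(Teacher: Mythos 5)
Your proposal is correct and follows essentially the same route as the paper's own argument (Example \ref{thm.setofW3d}): the block condition $WV=(\tilde{W}_k,O)$ gives $w_iv_3=0$, i.e., the plane through the origin normal to $v_3$, and the Gram-determinant computation with $\Sigma=\sigma^2I_3$ and $w_1w_2^{T}=0$ gives the sphere $\|w_1\|^2=\epsilon/\|w_2\|^2$, whose intersection with the plane is a great circle. The only discrepancy is a harmless normalization ($\epsilon$ versus $\epsilon^k$ in the determinant constraint) that is already inconsistent between Equations (\ref{WSimas0}) and (\ref{WSimas1}) in the paper itself and does not affect the geometric conclusion.
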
 

In Fig.\ref{fig:generate3d}, we visualize the solution set of $w_1=(w_{11},w_{12},w_{13})$. At the same time, we can extend the scope of application of Example \ref{thm.setofW3d} for general form of covariance matrix $\Sigma$, the solution set becomes an ellipsoid. If $w_1w_2^{T}\neq 0$, then the spherical or ellipsoidal surface undergoes translation. Therefore, the final solution sets for $w_1$ and $w_2$ are the intersection lines of ellipsoids or spheres with planes. Two spatial curves form the solution set of $W$. So we can get the proposition below.
\begin{proposition}
 In three-dimensional space $\mathcal{R}^3$, $\Sigma>0$ the coarse-graining parameter matrix $W$ can be regarded as composed of $k$ row vectors $w_i$, $k=1,2,3$, $i=1,\dots,k$, the solution space of each vector $w_i$ is the intersection of a plane and an ellipsoid, which is an ellipse.
\end{proposition}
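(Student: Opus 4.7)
The plan is to mirror the derivation of Example~\ref{thm.setofW3d}, but with the simplifying hypothesis $\Sigma=\sigma^2 I_3$ relaxed to a general $\Sigma>0$ and the orthogonality $w_1 w_2^T=0$ no longer imposed. By Corollary~\ref{thm.setofW}, any $W^{*}\in\mathcal{W}^{*}$ with $n=3$ must satisfy the linear eigenvector condition $WV=(\tilde{W}_k,O_{k\times(n-k)})$, which in the main case $k=2$ restricts each row $w_i$ to the plane $\Pi:=\{w\in\mathcal{R}^3:w v_3=0\}$ through the origin orthogonal to $v_3$, together with the determinant condition $\det(W\Sigma W^T)=\epsilon\det(\Sigma)^{k/n}$. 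I aim to show that, with the complementary row $w_j$ ($j\neq i$) held fixed, the determinant condition cuts out an ellipsoidal quadric in $\mathcal{R}^3$ whose intersection with $\Pi$ is the claimed ellipse.

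The key algebraic step is to expand
\begin{equation*}
\det(W\Sigma W^T)=(w_1\Sigma w_1^T)(w_2\Sigma w_2^T)-(w_1\Sigma w_2^T)^2,
\end{equation*}
and, treating $w_j$ as fixed, to identify the left-hand side as the quadratic form $w_i Q_j w_i^T$ with $Q_j=(w_j\Sigma w_j^T)\,\Sigma-(\Sigma w_j^T)(\Sigma w_j^T)^T$. A Cauchy-Schwarz argument in the $\Sigma$-inner product $\langle u,v\rangle_\Sigma:=u\Sigma v^T$ shows that $Q_j$ is positive semidefinite, with equality in $w_i Q_j w_i^T\geq 0$ occurring precisely when $w_i\parallel w_j$. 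Its level set is therefore an ellipsoidal surface, deformed from the sphere of Example~\ref{thm.setofW3d} by the anisotropy of $\Sigma$ and recovering that sphere in the isotropic case. When $w_1 w_2^T\neq 0$, the mixed inner product contributes only an affine perturbation in $w_i$, which translates the ellipsoid off the origin exactly as anticipated in the text preceding the proposition. Intersecting this quadric with the plane $\Pi$ then yields a bounded conic section --- the claimed ellipse --- which reduces to the circle of Example~\ref{thm.setofW3d} when $\Sigma=\sigma^2 I_3$.

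The main obstacle I anticipate is the rank deficiency of $Q_j$: because $Q_j w_j^T=0$, the level set in $\mathcal{R}^3$ is strictly an elliptic cylinder with axis along $w_j$ rather than a compact ellipsoid, and because the eigenvector constraint forces both $w_i$ and $w_j$ into the same plane $\Pi$ the cylinder axis actually lies inside $\Pi$. A naive intersection would therefore produce parallel lines rather than an ellipse, so the proof must combine the constraints more carefully --- for example, by restricting from the start to the two-dimensional plane $\Pi$, projecting out the direction of $w_j$ within $\Pi$, and exploiting the positive definiteness of the residual form on the remaining one-dimensional direction together with the scaling normalization imposed by the determinant condition. It is this transversality and dimension-counting step, rather than the algebraic identification of the quadric, where I expect the proof to demand the most delicate argument.
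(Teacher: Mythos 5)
Your starting point is the same as the paper's---the Proposition is obtained there by informally extending Example~\ref{thm.setofW3d} to a general $\Sigma>0$, replacing the sphere by an ellipsoid---but by carrying the algebra through without the Example's normalizations you have exposed exactly the step the paper glosses over, and your proposal does not close it. The obstacle you name in your final paragraph is real and is fatal to the argument as you have set it up: with $w_j$ fixed, the determinant condition reads $w_iQ_jw_i^T=\epsilon\det(\Sigma)^{2/3}$ with $Q_j=(w_j\Sigma w_j^T)\Sigma-\Sigma w_j^Tw_j\Sigma$, and since $Q_jw_j^T=0$ the level set is an elliptic cylinder whose axis $w_j$ lies inside the constraint plane $\Pi=\{w: wv_3=0\}$; the restriction of $Q_j$ to $\Pi$ therefore has rank one, and the intersection is a pair of parallel lines, not an ellipse. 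Your closing sentence defers precisely this point (``the proof must combine the constraints more carefully''), so the claimed conclusion is never established. Your middle paragraph is also internally inconsistent with your own computation: $(w_i\Sigma w_j^T)^2$ is a homogeneous quadratic in $w_i$, not an affine perturbation, so the quadric is never ``translated off the origin''---the corresponding sentence in the paper is equally unfounded, since a level set of a homogeneous quadratic form is always centered at the origin.

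What the paper actually does---and what you would need to reinstate to reach the statement as written---is to keep the Example's orthogonality normalization, now in the form $w_i\Sigma w_j^T=0$. Under that extra hypothesis the cross term vanishes, the determinant condition collapses to $w_i\Sigma w_i^T=\epsilon\det(\Sigma)^{2/3}/(w_j\Sigma w_j^T)$, which for $\Sigma>0$ is a genuine compact ellipsoid centered at the origin, and its intersection with the plane $\Pi$ through the origin is an ellipse, reducing to the circle of Example~\ref{thm.setofW3d} when $\Sigma=\sigma^2I_3$. (Even then one should acknowledge, as the paper does not, that $w_i\Sigma w_j^T=0$ is itself a third, planar constraint on $w_i$, so the ellipse is a parametrization of candidate rows rather than the literal solution set.) Without some such additional hypothesis, your own rank computation shows that the ``ellipsoid'' in the Proposition does not exist, so the gap is not a technicality to be deferred---it is the whole content of the statement.
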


With a visualized solution set in three-dimensional space, we can infer the optimal solution of $W$ in high-dimensional space.
\begin{proposition}
For any $n>1$, a coarse-graining matrix $W\in \mathcal{R}^{n\times k}$ can be split into $k$ row vectors $w_i\in \mathcal{R}^n$ in the same way, and the solution set of $\mathcal{W}^*$ can be understood as the intersection of hyperplanes and hyperellipsoids in an $\mathcal{R}^n$.
\end{proposition}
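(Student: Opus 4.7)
The plan is to take Corollary~\ref{thm.setofW} as the starting point: $\mathcal{W}^*$ is carved out by exactly two conditions, the linear condition $WV=(\tilde{W}_k,O_{k\times(n-k)})$ and the determinantal condition $\det(W\Sigma W^{T})=\epsilon\det(\Sigma)^{k/n}$. The strategy is to split $W$ into its $k$ rows $w_1,\dots,w_k\in\mathcal{R}^n$, freeze all rows except a single $w_i$, and show that the first condition then cuts out an intersection of hyperplanes and the second cuts out a (possibly degenerate) hyperellipsoid in $\mathcal{R}^n$, exactly paralleling the picture made explicit for $n=3$, $k=2$ in Example~\ref{thm.setofW3d}.

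The linear condition is the easy half. Writing $WV=(\tilde{W}_k,O_{k\times(n-k)})$ row by row gives $w_iv_j=0$ for every row index $i$ and every $j=k+1,\dots,n$, and since the eigenvectors $v_{k+1},\dots,v_n$ of $A$ are linearly independent, this imposes $n-k$ independent homogeneous linear equations on each $w_i$. The feasible set of $w_i$ under this condition alone is therefore the intersection of $n-k$ hyperplanes through the origin of $\mathcal{R}^n$.

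For the determinantal condition the key step is to show that, with the other rows frozen, $w_i\mapsto\det(W\Sigma W^{T})$ is a positive-semidefinite homogeneous quadratic form $w_iB_iw_i^{T}$ for some symmetric $B_i\in\mathcal{R}^{n\times n}$ depending only on $\Sigma$ and the frozen rows. With the other rows fixed, the entries of $W\Sigma W^{T}$ are constant in $w_i$ when neither index is $i$, linear in $w_i$ when exactly one index equals $i$, and quadratic in $w_i$ only at the diagonal $(i,i)$-entry $w_i\Sigma w_i^{T}$; expanding by the Leibniz formula, each permutation $\sigma$ contributes either one quadratic factor (when $\sigma(i)=i$) or a product of two linear factors (when $\sigma(i)\neq i$, coming from rows $i$ and $\sigma^{-1}(i)$), so every summand, and hence the whole determinant, is homogeneous of degree two in $w_i$. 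Non-negativity of the form is immediate from the factorisation $W\Sigma W^{T}=(W\Sigma^{1/2})(W\Sigma^{1/2})^{T}\succeq 0$, which forces $\det(W\Sigma W^{T})\geq 0$ for every $w_i$. Hence $B_i\succeq 0$, and the level set $\{w_i:w_iB_iw_i^{T}=\epsilon\det(\Sigma)^{k/n}\}$ is a hyperellipsoid in $\mathcal{R}^n$. Its intersection with the $n-k$ hyperplanes from the linear condition gives the asserted description of the feasible $w_i$.

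The step I expect to be the main obstacle is controlling the potential degeneracy of $B_i$. Because any $w_i$ lying in the linear span of the frozen rows forces $W$ to drop rank and hence makes $\det(W\Sigma W^{T})=0$, that $(k-1)$-dimensional span sits inside the kernel of $B_i$, so $B_i$ is typically only PSD and the level set is an ellipsoidal cylinder rather than a bounded hyperellipsoid. This is precisely why Example~\ref{thm.setofW3d} introduces the auxiliary simplification $w_1w_2^{T}=0$: it collapses that cylinder against a transverse direction and yields the visible sphere. I would therefore handle the general case by reading ``hyperellipsoid'' in the proposition as including such degenerate quadrics, so that the per-row feasible set is the intersection of $n-k$ hyperplanes and one (possibly cylindrical) hyperellipsoid, which is the honest geometric content of the statement.
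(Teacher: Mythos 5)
Your proposal is correct, and it supplies a general argument where the paper gives essentially none: the paper's ``proof'' of this proposition is the explicit $n=3$, $k=2$ computation in Example~\ref{thm.setofW3d} (under the simplifications $w_1w_2^{T}=0$ and $\Sigma=\sigma^2I_3$) followed by an appeal to analogy for general $\Sigma$ and general $n$. Your route is genuinely different: you prove directly, via the Leibniz expansion, that with the other rows frozen the map $w_i\mapsto\det(W\Sigma W^{T})$ is a homogeneous degree-two polynomial, hence a quadratic form $w_iB_iw_i^{T}$, and that $B_i\succeq 0$ because $W\Sigma W^{T}$ is a Gram matrix; combined with the $n-k$ hyperplane conditions $w_iv_j=0$, $j=k+1,\dots,n$, this yields the claimed hyperplane-and-quadric description for every $n$ and $k$ at once. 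What your approach buys is twofold. First, it replaces the paper's inductive leap from a single worked example with an actual proof. Second, your degeneracy analysis is sharper than the paper's own gloss: the paper asserts that when $w_1w_2^{T}\neq 0$ the sphere/ellipsoid ``undergoes translation,'' but since the form stays homogeneous the quadric remains centered at the origin and instead becomes a cylinder whose axis contains the span of the frozen rows (which you correctly place inside $\ker B_i$). Your reading of ``hyperellipsoid'' as including such degenerate quadrics is the honest interpretation under which the proposition holds; the only omission, which is harmless here, is the open genericity condition that $\tilde{W}_k=W(v_1,\dots,v_k)$ be invertible.
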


So after obtaining the relevant parameters of dynamics, we first derive the expression for a high-dimensional ellipse, where any point on the high-dimensional ellipse can maximize the causal emergence. At the same time, when the model is more complex, we can also choose to find points close to the ellipse, even if it is not the optimal solution, which can increase the likelihood of causal emergence.

\begin{figure}[htbp]
    \centering
    \includegraphics[width=0.6\textwidth]{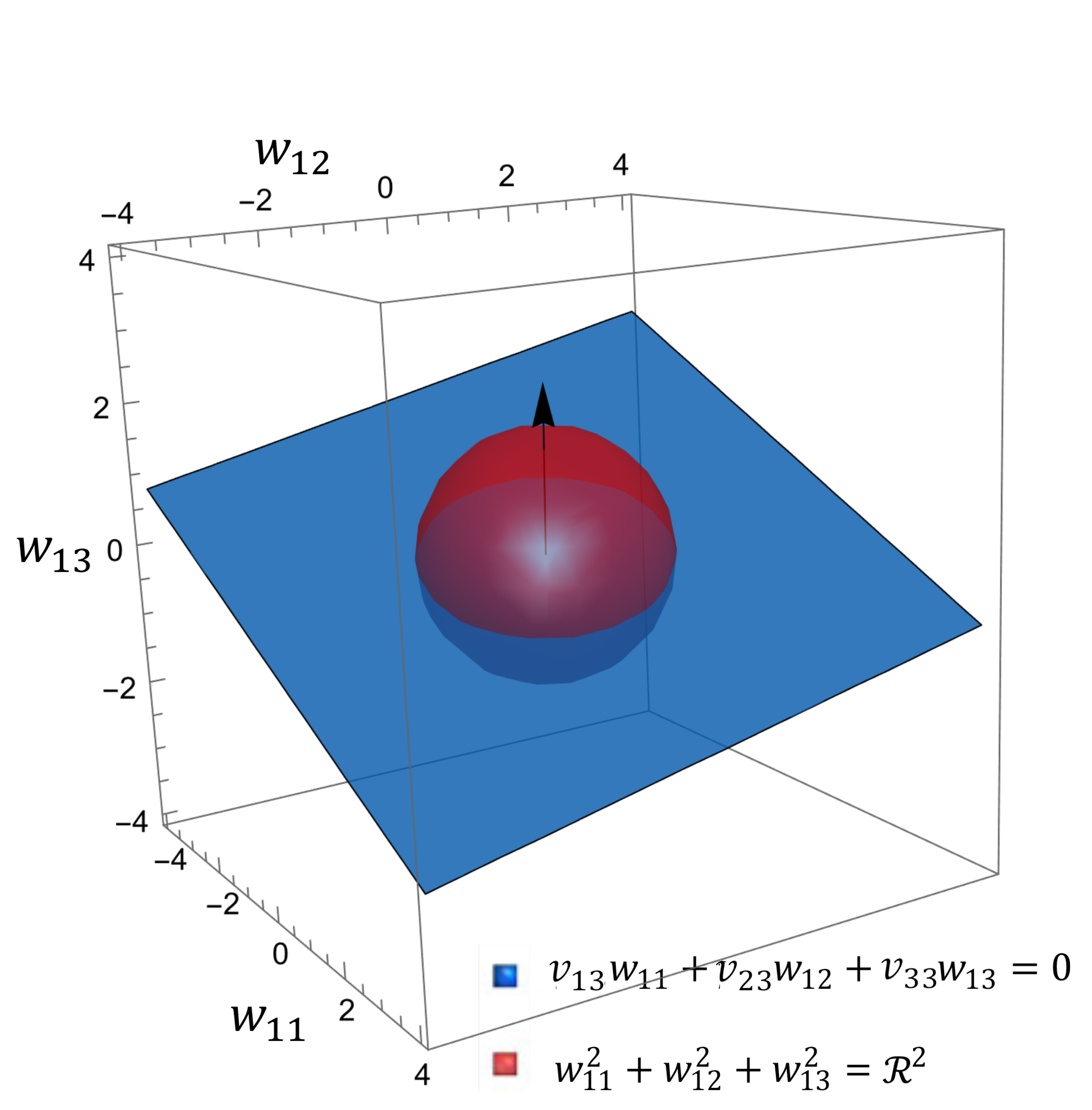}
    \caption{The visualization of the set for the optimal solutions of the coase-graining strategy $W$. When $w_1w_2^{T}=0$, $\Sigma=\sigma^2I_3$. Although $W\in \mathcal{R}^{2\times 3}$ is in a six-dimensional space, we can draw the range of values for $w_1$ while limiting $w_2$, $w_{11}v_{13}+w_{12}v_{23}+w_{13}v_{33}=0$ and $w_{11}^2+w_{12}^2+w_{13}^2=R^2,$
in which $R^2=\epsilon/(w_2w_2^{T})$. When causal emergence $\Delta\mathcal{J}(W^*)=\Delta\mathcal{J}^*$, the solution set of $w_i$ is the intersection of a plane (blue) and a sphere (red) in 3D space $\mathcal{R}^3$, which is a circle}
    \label{fig:generate3d}
\end{figure}
\subsection{The conditions for causal emergence}
After knowing the solution method for the maximum causal emergence of the system, we need to analyze the conditions under which causal emergence occurs as $\Delta\mathcal{J}>0$. Two scenarios are being examined: the presence of causal emergence in typical situations and its manifestation under optimized conditions.

First, in general cases, causal emergence $\Delta\mathcal{J}$ is determined by two factors: degeneracy $\Delta\mathcal{J}_1$ and determinism $\Delta\mathcal{J}_2$. The sufficient condition for $\Delta\mathcal{J}>0$ is that $\Delta\mathcal{J}_1,\Delta\mathcal{J}_2>0$ at the same time.

The prerequisite for $\Delta\mathcal{J}_1>0$ is that there is a significant difference between the eigenvalues $\lambda_1\geq\dots\geq\lambda_n$ of parameter matrix $A\in\mathcal{R}^{n\times n}$. Assuming the macro-state $y_t\in \mathcal{R}^{k}$, the greater the difference between $\lambda_1,\dots,\lambda_k$ and $\lambda_{k+1},\dots,\lambda_{n}$, the more obvious the causal emergence. The best condition is that $\lambda_k\gg\lambda_{k+1}$, if the latter $n-k$ term is discarded, there will be a significant causal emergence.

The condition for $\Delta\mathcal{J}_2>0$ is that there is significant randomness in the system as $\Sigma$ is a positive definite matrix and $H(p(x_t))>-\infty$. If $\Sigma=0$ and $H(p(x_t))=-\infty$ as there is no random noise in the system, effective information is also invalid. In theory, the larger the system noise, the larger the space that can be compressed during coarse-graining, and the stronger the causal emergence can appear.

Second, when causal emergence reaches its optimal solution, we can also determine the condition for $\Delta\mathcal{J}>0$ based on the result of $\Delta\mathcal{J}^*$. From Equation (\ref{maxCE12}), we can know that when 
\begin{eqnarray}\label{CEgeq0}
		\begin{aligned}
\frac{1}{n}\sum_{i=1}^{n}\ln\displaystyle|\lambda_i|-\frac{1}{k}\sum_{i=1}^{k}\ln\displaystyle|\lambda_i|<\eta
		\end{aligned}
	\end{eqnarray}
as the difference of the average singular value of $A$ to the average singular value of $A_M$ after taking the logarithm can be smaller than the bound of the information entropy $\eta$ we specify, $\Delta\mathcal{J}^*>0$. If this result holds, we can assert that there must exist a $W$ to make $\Delta\mathcal{J}>0$.
\section{Results}

After knowing how to optimize causal emergence, we can use the theorems we derived to analyze several cases of linear stochastic iteration systems. We will attempt to search for the maximum of causal emergence in systems with known dynamics. We provide three cases, focusing on the applications of determinism emergence, degeneracy emergence, and the manifestation of causal emergence in $\mathcal{R}^3$ space.

\subsection{Random walk}\label{secRandomWalk}
Our first case is random walk \cite{Lawler2010}, and our analysis focuses on the noise $\varepsilon_t$ and the covariance matrix $\Sigma$, which mainly impact determinism emergence. The random walk model is a mathematical model used to describe the random movement of objects in a specific space, in which, a walker moves between a series of locations, with the direction and distance of each movement being random. This model can study various phenomena, such as changes in asset prices in financial markets, the diffusion of particles in fluids, etc. In the random walk model, the walker can only move a certain distance to the left or right each time, and the distance obeys a normal distribution. When multiple wanderers coexist, a changing parameter matrix can be formed as an identity matrix, $A=I_n$,
\begin{eqnarray}\label{Random}
	x_{t+1}=x_t+\varepsilon_t, \varepsilon_t\sim \mathcal{N}(0,\Sigma),
\end{eqnarray}
which forms a linear stochastic series mainly affected by noise sequences $\varepsilon_t$.

We set an example as $n=4$, $k=1$ while
\begin{eqnarray}\label{RandomSigma}
	\begin{gathered}
	\Sigma=\begin{pmatrix} 0.4782& -0.1967& -0.0287&  0.0419\\
       -0.1967&  0.6711& 0.0233& -0.1067\\
       -0.0287&  0.0233&  0.3154&  0.0738\\
        0.0419&-0.1067& 0.0738&  0.4211\end{pmatrix}.
\end{gathered}
\end{eqnarray}
The motion trend of each dimension of $x_t$ is shown in Fig.\ref{fig:random_walk}a. From Equation (\ref{SetWSig}) we already know that the ratio between $\det{(W\Sigma W^T)}^{1/k}$ and $\det{(\Sigma)}^{1/n}$ should be greater than the lower bound $\epsilon$. The constant for uncertainty loss is $\eta=0.3466$ and the corresponding $\epsilon=0.5$. This value selected guarantee that when the singular values of $W$ are all $1$, the determinism emergence $\Delta\mathcal{J}_2=0.2439>0$ as $W=(-0.0819,0.1432,-0.8421,0.5135)$ satisfies $\det(W\Sigma W^T)^\frac{1}{k}/\det(\Sigma)^\frac{1}{n}=0.614>\epsilon$. 

We can plot probability density graphs for the four dimensions of micro-state noise $\varepsilon_t$ and macro-state noise $\varepsilon_{M,t}$ in Fig.\ref{fig:random_walk}b, respectively. We can find that the variance of macro-state is smaller than that of micro-state, so the determinism emergence of macro-state dynamical systems is larger. Meanwhile, in Fig.\ref{fig:random_walk}c, we can see that under the condition that the singular values of $W$ are all 1, the higher $k$, the stronger the uncertainty of the system, and the smaller the degree of causal emergence.

Reducing the average noise of the system and increasing its determinism is an important condition for the occurrence of causal emergence. The main reason is that we can reduce the dimensions with weaker determinism. The premise for this phenomenon is that there are significant differences in noise between different dimensions. As shown in Fig.\ref{fig:random_walk}d, we randomly generate $\Sigma$ and $W$ and calculate the value of $\Delta\mathcal{J}$. We found that the larger the variance of the eigenvalues $\kappa_i$ of $\Sigma$, $i=1,\dots,n$, the more likely the occurrence of causal emergence.
\begin{figure}[htbp]
	\centering
 \begin{minipage}[c]{0.4\textwidth}
		\centering
		\includegraphics[width=1\textwidth]{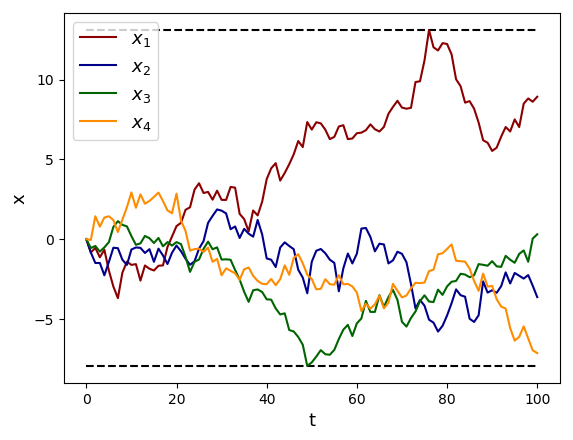}
		\centerline{(a)}
	\end{minipage}
	\begin{minipage}[c]{0.4\textwidth}
		\centering
		\includegraphics[width=1\textwidth]{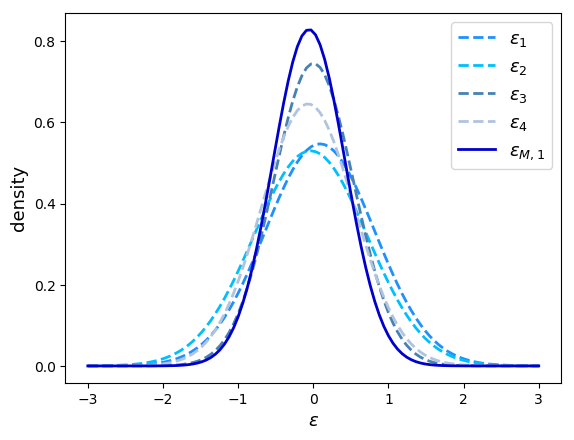}
		\centerline{(b)}
	\end{minipage}
	\begin{minipage}[c]{0.4\textwidth}
		\centering
		\includegraphics[width=1\textwidth]{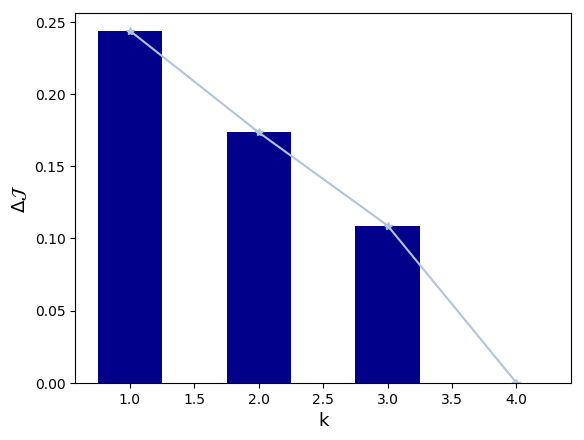}
		\centerline{(c)}
	\end{minipage}
        \begin{minipage}[c]{0.4\textwidth}
		\centering
		\includegraphics[width=1\textwidth]{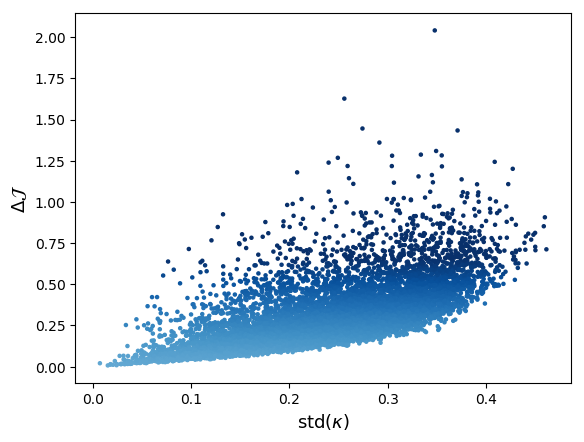}
		\centerline{(d)}
	\end{minipage}
	\caption{Experimental results for random walks model. (a) The trajectories $x_t$ of a random walker in different dimensions. (b) The probability density functions for the four dimensions of micro-state noises $\varepsilon_t$ and the macro-state noise $\varepsilon_{M,t}$. (c) The degree of causal emergence under different macro-state dimension $k$. Under the condition that the singular values of $W$ are all 1, the higher $k$, the stronger the uncertainty of the system, and the smaller the degree of causal emergence. (d) $\Delta\mathcal{J}$ and $std(\kappa)$. The larger the variance of the eigenvalues $\kappa_i$ of $\Sigma$, $i=1,\dots,n$, the more likely the occurrences of causal emergence with higher degrees.}
 \label{fig:random_walk}
\end{figure}

\subsection{Heat dissipation}\label{secEnergydissipation}
The first case focuses on determinism and noise, while the second case will focus on degeneracy and parameter matrix $A$. This case is called the discretized \cite{Peaceman1955} heat conduction model \cite{Bergman2011,Patankar2018}. In this discretized case, the conduction of heat is mainly reflected in the temperature changes over time \cite{LeVeque2007} at the corresponding observation nodes \cite{Roy2020}. We can use matrix $A$ to represent the changes in temperature, which includes information on the rate of change. This matrix usually corresponds to a positive definite sparse matrix, which can be used to describe a system's temperature change process. For example
\begin{eqnarray}\label{dissipation}
	x_{t+1}=Ax_t+\varepsilon_t, \varepsilon_t\sim \mathcal{N}(0,\sigma^2 I_n), \sigma=0.01,
\end{eqnarray}
describes the variation of temperature $x_t$ over time $t$. When $n=4$, $x_t=(x_{1,t},x_{2,t},x_{3,t},x_{4,t})^{T}$, $x_{1,t}$, $x_{2,t}$, $x_{3,t}$, and $x_{4,t}$ represent the temperatures at four nodes at time $t$ as shown in Fig.\ref{fig:energy}b. The parameter matrix is,
\begin{eqnarray}\label{Aenergy}
\begin{gathered}
	A=\begin{pmatrix}  0.6&0.2&0&0\\
	0.2&0.7&0.1&0\\
    0&0.1&0.4&0.1\\
    0&0&0.1&0.3\end{pmatrix},
\end{gathered}
\end{eqnarray}
where diagonal elements represent the rates of heat conservation at each node, and the off-diagonals represent the energy transfer between different nodes as shown in Fig.\ref{fig:energy}a. Assuming $\epsilon=1$ as $\eta=0$, then the uncertainties in both the dynamics at micro- and macro- levels are the same according to Equation (\ref{SetWSigEntropy}). As a result, only the effect of degeneracy emergence can influence the causal emergence according to Equation (\ref{maxCE12}).

We can simulate the diffusion process by setting the initial state to $x_0=(10,10,10,10)^{T}$ and we iterate the equation $x_{t+1}=Ax_t+\varepsilon_t$ for $t=1,2,\dots$. The results are shown in Fig\ref{fig:energy}b. We can directly coarse-grain the micro-state data to obtain $y_t=Wx_t$ for $t=0,1,2,\dots$. At the same time, we can use the macro-state dynamics to iterate the initial macro-state $y_0$. In order to distinguish from the previous $y_t$ data, we use $\hat{y}_t$ to represent the data generated by macro-state dynamical iterations here. $\hat{y}_0=y_0$ and $\hat{y}_{t+1}=WAW^\dagger\hat{y}_{t}+W\varepsilon_t$ for $t=0,1,2,\dots$. By comparison in Fig.\ref{fig:energy}c, we can find that the dynamics of $y_t$ and $\hat{y}_t$ are close to each other, macro-state dynamics can represent the dynamic evolution law of data after the coarse-graining.

By reducing the low correlation and dimensions that are difficult to reverse between $x_t$ and $x_{t+1}$, degeneracy can be increased, resulting in stronger causal emergence. For example, in the case shown in Fig.\ref{fig:energy}d, when $k=1$, $\sigma=0.01$ and $\eta=0$ as Shannon entropy of $x_t$ and $y_t$ remains stable, causal emergence occurs with the degree $\Delta\mathcal{J}=0.6656$ and it is greater than the case when $k>1$. We can take any element from $\mathcal{W}^*$, for example, $W=(0.5856, 0.7910 , 0.1748, 0.03065)\in \mathcal{R}^{1\times k}$ which satisfies our condition of $W$ in Equation (\ref{WSimas0}). The macro-state dynamic parameter matrix is $A_M\in \mathcal{R}^{1\times 1}$ with an eigenvalue $\lambda_{M,1} = 0.8702$. The eigenvalues of the micro-state matrix $A$ are $\lambda_1=0.8702, \lambda_2= 0.5000, \lambda_3= 0.4000$, and $\lambda_4=0.2298$. It can be seen that retaining the maximum eigenvalue will result in significant causal emergence. In physical terms, it can be understood as a region with slower dissipation, and the causal relationship between the temperature at time $t+1$ and time $t$ is stronger, making it more suitable for analyzing the temperature evolution law of the entire system. 

To verify our theoretic result about the exact expression for causal emergence, we can compare the obtained analytical solution in Equation (\ref{maxCE}) with the numerical results which can be obtained by the following steps. We first randomly generate numeric samples of $x_t\sim U([-1,1]^n)$ and $y_t\sim U([-1,1]^k)$, then perform a one-step iteration for micro and macro-dynamics as $x_{t+1}=Ax_t+\varepsilon_t$ and $y_{t+1}=A_My_t+\varepsilon_{M,t}$, $\varepsilon_t\sim \mathcal{N}_n(0,\sigma^2I_n)$ and $\varepsilon_{M,t}\sim \mathcal{N}_k(0,\sigma^2I_k)$. We then directly calculate the mutual information $I(x_{t+1},x_t|do(x_t)\sim U([-1,1]^n))$ for micro-dynamics and $I(y_{t+1},y_t|do(y_t)\sim U([-1,1]^k))$ between the generated samples of $x_{t+1}$ at time $t+1$ and the samples of $x_t$ at time $t$ to obtain the numerical solution of effective information and calculate causal emergence as
\begin{eqnarray}\label{generateI}
\Delta\mathcal{I}=\frac{I(y_{t+1},y_t|do(y_t)\sim U([-1,1]^k))}{k}-\frac{I(x_{t+1},x_t|do(x_t)\sim U([-1,1]^n))}{n}.
\end{eqnarray}
To distinguish from analytical solutions $\Delta\mathcal{J}$, we use $\Delta\mathcal{I}$ to represent the causal emergence computed by the mentioned numeric method. From Fig.\ref{fig:energy}e we found that when $n=4, k=1$ and $\sigma^2=0.01$, as the sample size increases, the numerical solutions for causal emergence gradually approach the analytical solutions.
\begin{figure}[htbp]
	\centering
 \begin{minipage}[c]{0.4\textwidth}
		\centering
		\includegraphics[width=1\textwidth]{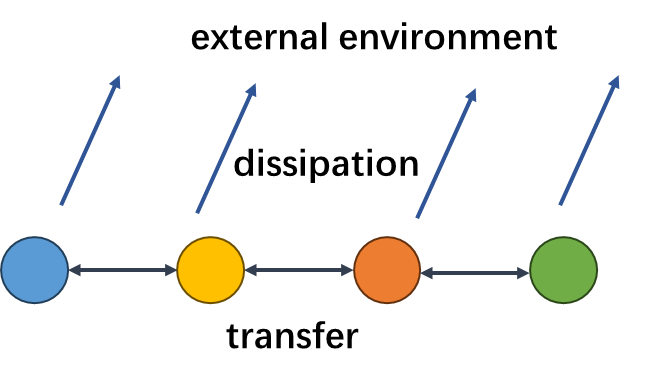}
		\centerline{(a)}
	\end{minipage}\\
\begin{minipage}[c]{0.4\textwidth}
		\centering
		\includegraphics[width=1\textwidth]{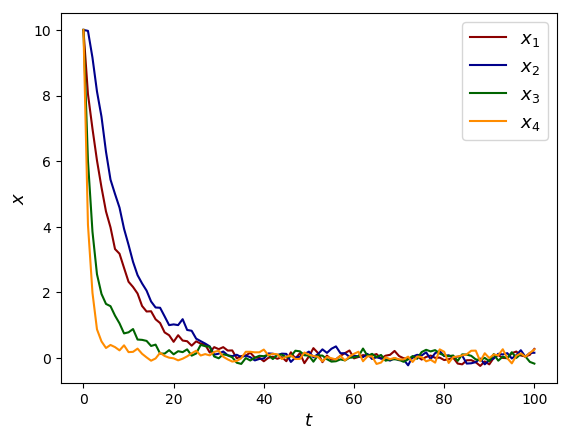}
		\centerline{(b)}
	\end{minipage}
	\begin{minipage}[c]{0.4\textwidth}
		\centering
		\includegraphics[width=1\textwidth]{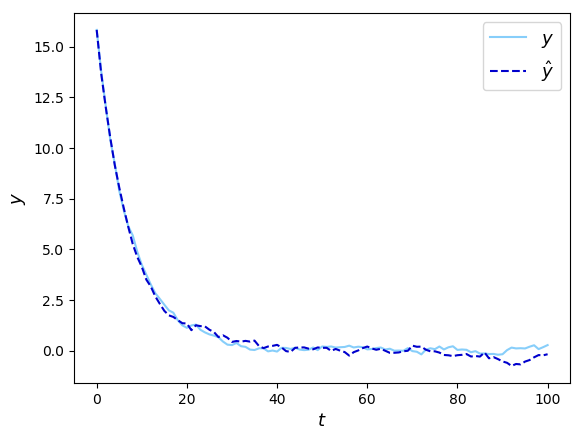}
		\centerline{(b)}
	\end{minipage}
	\begin{minipage}[c]{0.4\textwidth}
		\centering
		\includegraphics[width=1\textwidth]{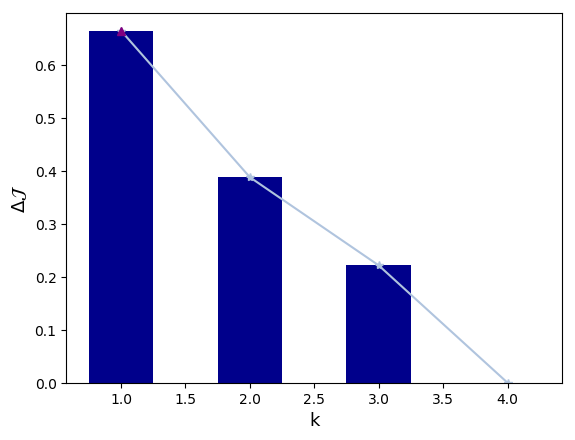}
		\centerline{(d)}
	\end{minipage}
        \begin{minipage}[c]{0.4\textwidth}
		\centering
		\includegraphics[width=1\textwidth]{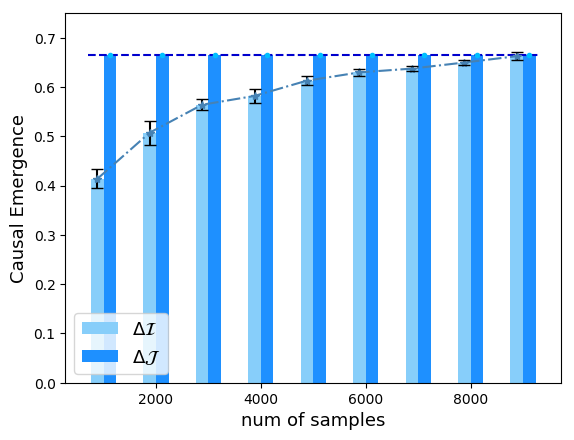}
		\centerline{(e)}
	\end{minipage}
	\caption{Experimental results for heat dissipation model. (a) The illustration of heat transfer and dissipation for a one-dimensional chain with four vertices. (b) The micro-state data of temperature at different times $t$ for 4 nodes. (c) The comparison between the macro-state data directly obtained by coarsening the micro-state data, $y_t$ to obtain $y_t=Wx_t$ for $t=0,1,2,\dots$, and $\hat{y}_t$ which is generated by the macro-state dynamics to iterate the initial macro-state $y_0$. (d) The comparisons for the degrees of causal emergence under different dimensions of macro-states $k$.  when $k=1$ and $\epsilon=1$, maximized degree of causal emergence occurs as $\Delta\mathcal{J}=0.6656$. (e) The comparison between $\Delta\mathcal{I}$ and $\Delta\mathcal{J}$. As the sample size increases, the numerical solutions for causal emergence $\Delta\mathcal{I}$ gradually approach the analytical solutions $\Delta\mathcal{J}$.}
    \label{fig:energy}
\end{figure}

\subsection{Spiral rotating}
\begin{figure}[htbp]
	\centering
\begin{minipage}[c]{0.4\textwidth}
		\centering
		\includegraphics[width=1\textwidth]{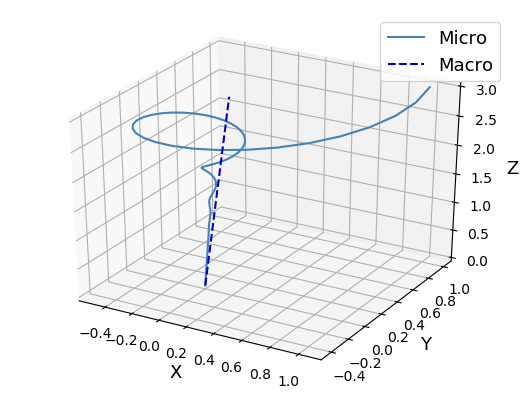}
		\centerline{(a)}
	\end{minipage}
	\begin{minipage}[c]{0.4\textwidth}
		\centering
		\includegraphics[width=1\textwidth]{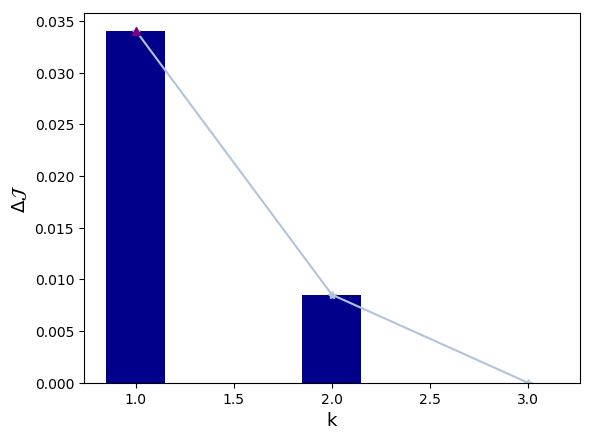}
		\centerline{(b)}
	\end{minipage}
	\begin{minipage}[c]{0.4\textwidth}
		\centering
		\includegraphics[width=1\textwidth]{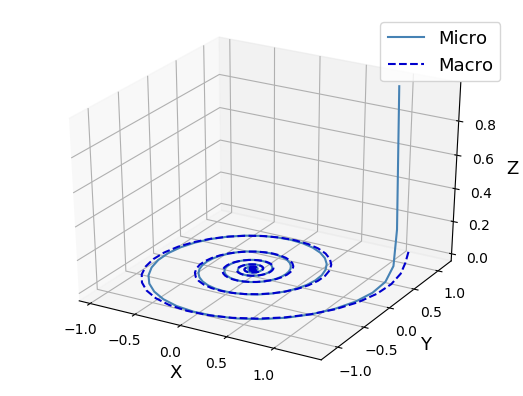}
		\centerline{(c)}
	\end{minipage}
        \begin{minipage}[c]{0.4\textwidth}
		\centering
		\includegraphics[width=1\textwidth]{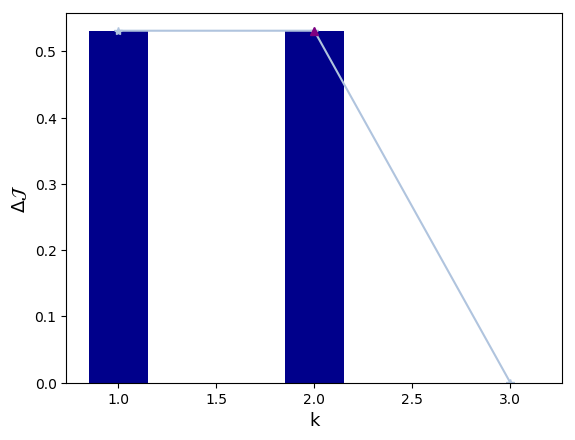}
		\centerline{(d)}
	\end{minipage}
 \begin{minipage}[c]{0.4\textwidth}
		\centering
		\includegraphics[width=1\textwidth]{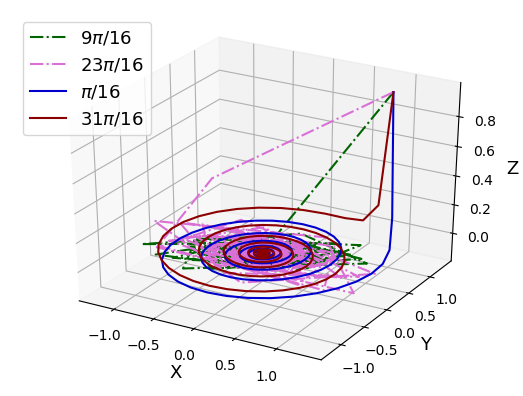}
		\centerline{(e)}
	\end{minipage}
        \begin{minipage}[c]{0.4\textwidth}
		\centering
		\includegraphics[width=1\textwidth]{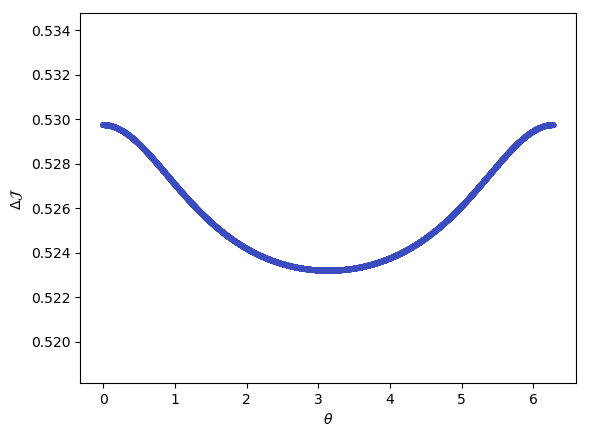}
		\centerline{(f)}
	\end{minipage}
	\caption{Experimental results of spiral rotating. (a) When $\Psi={\rm diag}(0.94,0.94,0.99)$ and $x_0=(1,1,3)^T$, $x_t $ represents a point in space that rotates around the axis of rotation and contracts towards the axis of rotation. By the optimal coarse-graining $W^*$, we can obtain macro-states that move along the axis of rotation. (b) The degree of causal emergence and its dependence on the dimension of macro-states for the example in (a), which reaches its maximum value as $\Delta\mathcal{J}=0.0341$ when $k=1$. (c) The spiral curve when $\Psi={\rm diag}(0.99,0.97,0.2)$ and $x_0=(1,1,1)^T$, the trajectory of $x_t$ is compressed to a plane perpendicular to $u$ at the initial stage. (d) We can project $x_t$ on the plane as a macro-state, where $k=2$, $\Delta\mathcal{J}=0.5295$. (e) When $\pi/2<\theta<3\pi/2$, $x_t$ tends to oscillate more than rotate, while $\theta<\pi/2$ or $\theta>3\pi/2$ $x_t$ tends to be a rotate model. (f) When $\theta$ approaches $\pi$, the degree of causal emergence is smaller, while when it approaches $0$ or $2\pi$, the degree of causal emergence is larger.}
 	\label{fig:rotation}
\end{figure}

In addition to the results obtained for specific models in sections \ref{secRandomWalk} and \ref{secEnergydissipation}, we need a more intuitive understanding of the meaning of coarse-graining and causal emergence. Here, we take a spiral rotating model \cite{Weisstein2003} from analytical geometry \cite{Qiu1996,Boyer2012} as an example to visualize our model system in $\mathcal{R}^3$ space and analyze how causal emergence is reflected. A common example is a model in which a rigid body rotates about an axis. Suppose we have a vector in three-dimensional space that represents the position of a point $x_0\in \mathcal{R}^3$, and we want to rotate it about some straight line through the origin. This operation can be represented by the rotation matrix $R$. For rotation around the straight line represented by the unit vector $u=(a,b,c)$, $||u||=\sqrt{a^2+b^2+c^2}=1$, a rotation matrix can be used to describe the rotation operation:
\begin{eqnarray}\label{R}
	\begin{gathered}
		R=\begin{pmatrix}  {\rm cos}\theta+a^2 (1-{\rm cos}\theta)&ab(1-{\rm cos}\theta)-c {\rm sin}\theta & ac(1-{\rm cos}\theta )+b {\rm sin}\theta \\ ab(1-{\rm cos}\theta )+c {\rm sin}\theta  &{\rm cos}\theta+b^2 (1-{\rm cos}\theta)&bc(1-{\rm cos}\theta )-a {\rm sin}\theta \\ac(1-{\rm cos}\theta )-b {\rm sin}\theta& bc(1-{\rm cos}\theta )+a {\rm sin}\theta&{\rm cos}\theta+c^2 (1-{\rm cos}\theta)\end{pmatrix}.
	\end{gathered}
\end{eqnarray}
Where $\theta$ represents the angle of the rotation. This matrix describes a rotation about the straight line represented by the unit vector $u$. Multiply the vector $x_t$ by the matrix $R$ to get the new rotated vector $x_{t+1}$. This rotation matrix describes a model of rotation around a certain straight line and can be used to describe rotation operations around a specified axis in three-dimensional space. Since $R$ doesn't change the modulus of $x_t$, we can add an adjustment matrix $\Psi={\rm diag}(\psi_1,\psi_2,\psi_3)$ to adjust the modulus of the variable $x_t$. Then the linear stochastic iteration systems in $\mathcal{R}^3$ is 
\begin{eqnarray}\label{rotation}
	x_{t+1}=Ax_t+\varepsilon_t, A=R\Psi,\varepsilon_t\sim \mathcal{N}(0,\sigma^2 I_n), \sigma=0.01,
\end{eqnarray}
when $\psi_i<1,i=1,2,3$, the model is attenuated. Adjusting $\psi_i$ can cause the system to produce different values of causal emergence at different macro-state dimensions. 

Assuming the direction vector of the rotation axis is $u_0=(0,0.1,1)^T$, after normalization as $u=u_0/||u_0||$, the rotation axis $u$ can be obtained. At the same time, we specify the rotation angle $\theta=\pi/16$ and $\eta=0$ as the dimension averaged Shannon entropy of both $x_t$ and $y_t$ remains the same. Through $\theta$ and $u$, we can get the rotation matrix $R$. By changing $\Psi$, we will obtain causal emergence in different forms. 

The occurrence of causal emergence in three-dimensional space can be understood as the stronger causal effect of system evolution when the path of $x_t$ is projected onto a plane or a straight line. As shown in Fig.\ref{fig:rotation}a, when $\Psi={\rm diag}(0.94,0.94,0.99)$ and $x_0=(1,1,3)^T$, $x_t$ contracts towards the axis of rotation. In Fig.\ref{fig:rotation}b, the causal emergence reaches its maximum value as $\Delta\mathcal{J}=0.0341$ when $k=1$. The eigenvalues of the micro-state matrix $A$ are $\lambda_1=0.9895$,$\lambda_2=0.9222+0.1834i$ and $\lambda_3=0.9222-0.1834i$, the macro-state parameter matrix $A_M$ only retains the eigenvalues with the maximum modulus $\lambda_1=0.9895$. This result is obtained by setting the coarse-graining mapping $W$ to compress the rotation trajectory into a one-dimensional straight line as the macro-states. 

In the second scenario in Fig.\ref{fig:rotation}c, when $\Psi={\rm diag}(0.99,0.97,0.2)$ and $x_0=(1,1,1)^T$, the path of $x_t$ will first shrink to a plane perpendicular to $u$. In \ref{fig:rotation}d, we project $x_t$ onto the plane as the macro-state, where $k=2$, $\Delta\mathcal{J}=0.5295$. We retain the two eigenvalues $\lambda_1=0.9612 +0.1900i$ and $\lambda_2=0.9612 -0.1900i$ with the largest and identical module lengths. The reason why we do not take $k=1$ here is that the more dimensionality we reduce, the greater the information loss and $|\lambda_1|=|\lambda_2|$. Therefore, in the absence of further improvement in causal emergence, we do not need to continue the dimensionality reduction of the system. 

Besides $\Psi$, rotation angle $\theta$ also significantly impacts causal emergence. In Fig.\ref{fig:rotation}f, when $\theta$ approaches $\pi$, the degree of causal emergence is smaller, while when it approaches $0$ or $2\pi$, it is larger. This is because when $\pi/2<\theta<3\pi/2$, $x_t$ tends to oscillate more than rotate, as shown in the Fig.\ref{fig:rotation}e. The correlation between $x_t+1$ and $x_t$ in the model is relatively small, leading to a weakening of causal emergence.

\section{Discussion}
\subsection{Optimal causal emergence when $W^{\dagger}=W^T$}

\begin{figure}[htbp]
	\centering
 \begin{minipage}[c]{0.4\textwidth}
		\centering
		\includegraphics[width=1\textwidth]{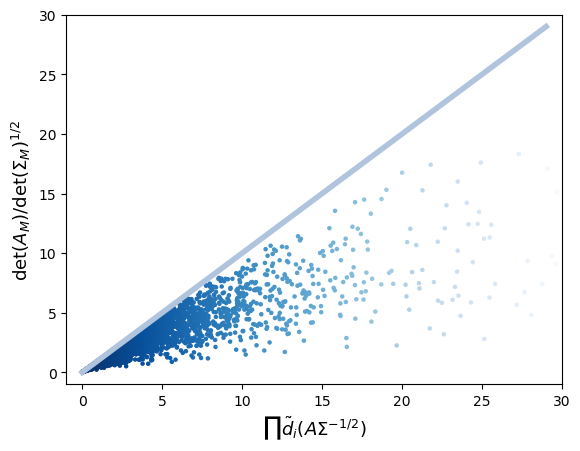}
		\centerline{(a)}
	\end{minipage}
 \begin{minipage}[c]{0.4\textwidth}
		\centering
		\includegraphics[width=1\textwidth]{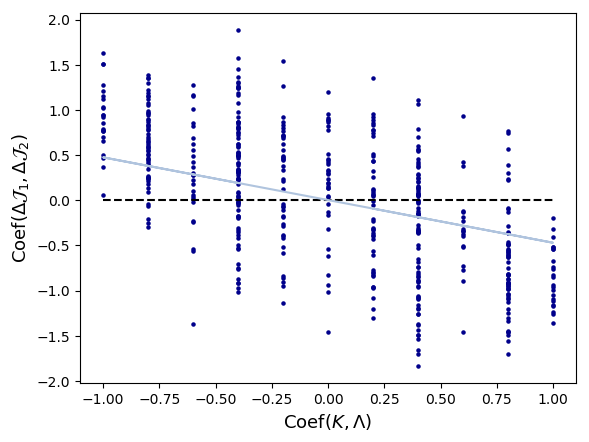}
		\centerline{(b)}
	\end{minipage}
	\caption{(a) We randomly generate matrices $A$, $\Sigma$ and $W$ with a singular value of 1 when $n=4$, $k=2$, we can see the upper bound of $|\det(WAW^\dagger)|/\det(W\Sigma W^T)^{1/2}$ is $\tilde{d}_1\tilde{d}_2$. (b) When $\lambda_1=0.8,\lambda_1=0.2,\lambda_3=0.4,\lambda_4=0.2$ and randomly arrange $(0.2, 0.4, 0.6, 0.8)$ to generate $(\kappa_1,\dots,\kappa_n)$. $C(\Delta\mathcal{J}_1,\Delta\mathcal{J}_2)$ and $C(\kappa,\lambda)$ shows a negative slope after drawing the scatter plot.}
	\label{fig:uW}
\end{figure}

In previous sections, we have obtained the explicit expression for the maximum causal emergence and the optimal coarse-graining under the condition that the dimension averaged uncertainty loss by coarse-graining is bounded. However, under this condition, the competition between the degeneracy term which is solely determined by $A$ and the determinism term which is determined by $A$ and $\Sigma$ in the optimal causal emergence can not be observed as shown in Equation (\ref{maxCE12}). 

Therefore, we will discuss the maximization of causal emergence under a different condition, that is,
\begin{equation}
    \label{eq.WdaggerT}
    W^{\dagger}=W^T,
\end{equation}
in this section. Equation (\ref{eq.WdaggerT}) means the coarse-graining map $W$ can be decomposed as a projection by discarding the information in $n-k$ dimensions and a rotation in a $k$-dimensional space without information loss. This also means the $k$ first singular values are units, and the norm of $W$ and $\Sigma$ will not be excessively enlarged or reduced. In this case, as $W^\dagger=W^T$, the coarse-graining strategy will simultaneously filter the eigenvalues of $A$ and $\Sigma$. According to Equation (\ref{EI}) and (\ref{Causal-emergence}), we can get that $|\det(A)|/\det(\Sigma)^{1/2}=|\det(A\Sigma^{-1/2})|$. So  $A\Sigma^{-1/2}$ determines the causal emergence of the system and is necessary to be analyzed below.  

\begin{thm}\label{JASigmaV}
When $W^{\dagger}=W^T$, the causal emergence of
linear stochastic iterative systems satisfies
   \begin{eqnarray}\label{Jdleq}
\Delta\mathcal{J}\leq\frac{1}{k}\sum_{i=1}^{k}\ln\displaystyle|\tilde{d}_i|-\frac{1}{n}\sum_{i=1}^{n}\ln\displaystyle|\tilde{d}_i|,
\end{eqnarray} 
where, $|\tilde{d}_1|\geq\dots\geq|\tilde{d}_n|$ are $n$ eigenvalues of $A\Sigma^{-1/2}$. The equal sign holds when $A$ and $\Sigma$ share the same $n$ eigenvectors as
\begin{equation}
\label{eq.sameeigens}
    A\Sigma^{-1/2}=V\Lambda K^{-1/2}V^T=V\Tilde{D}V^T,
\end{equation}where $\Tilde{D}=\Lambda K^{-1/2}={\rm diag}(\tilde{d}_1,\dots,\tilde{d}_n)$ is also a diagonal matrix, in which $|\tilde{d}_1|\geq\dots\geq|\tilde{d}_n|$, and $V$ is the matrix formed by the shared eigenvectors of $A$ and $\Sigma$.
\end{thm}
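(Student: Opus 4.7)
The plan is to reduce the theorem to a single sharp inequality on determinant ratios and dispatch it with Cauchy--Binet plus a concavity argument. Since $W^{\dagger}=W^{T}$ is equivalent to $WW^{T}=I_k$, substituting into Theorem~\ref{thm.Causal-emergence} collapses the two emergence pieces into
\[
\Delta\mathcal{J} = \frac{1}{k}\ln\frac{|\det(WAW^{T})|}{\det(W\Sigma W^{T})^{1/2}} - \frac{1}{n}\ln\frac{|\det(A)|}{\det(\Sigma)^{1/2}}.
\]
Because $|\det(A)|/\det(\Sigma)^{1/2} = |\det(A\Sigma^{-1/2})| = \prod_{i=1}^{n}|\tilde d_i|$, the micro term is already $-\frac{1}{n}\sum_{i=1}^{n}\ln|\tilde d_i|$, so the whole theorem reduces to proving
\[
\frac{|\det(WAW^{T})|}{\det(W\Sigma W^{T})^{1/2}} \leq \prod_{i=1}^{k}|\tilde d_i|.
\]

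I would next exploit the structure forced by Equation~(\ref{eq.sameeigens}) and work in the eigenbasis: write $A=V\Lambda V^{T}$, $\Sigma=VKV^{T}$ with $V$ orthogonal, and substitute $\tilde W = WV$, which still satisfies $\tilde W\tilde W^{T}=I_k$. The macro matrices become $\tilde W\Lambda\tilde W^{T}$ and $\tilde W K\tilde W^{T}$, and Cauchy--Binet expands both determinants as nonnegative sums over $k$-subsets $I\subseteq\{1,\dots,n\}$:
\[
\det(\tilde W\Lambda\tilde W^{T}) = \sum_{|I|=k}p_I\prod_{i\in I}\lambda_i, \qquad \det(\tilde W K\tilde W^{T}) = \sum_{|I|=k}p_I\prod_{i\in I}\kappa_i,
\]
with weights $p_I:=\det(\tilde W_{:,I})^{2}\geq 0$ and $\sum_I p_I = \det(\tilde W\tilde W^{T}) = 1$.

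The core inequality is then two lines. Writing $x_I=\prod_{i\in I}|\lambda_i|$ and $y_I=\prod_{i\in I}\kappa_i$, I bound $\sum_I p_I x_I \leq \max_I(x_I/y_I^{1/2})\cdot\sum_I p_I y_I^{1/2}$, then use concavity of $\sqrt{\cdot}$ (Jensen) to get $\sum_I p_I y_I^{1/2}\leq(\sum_I p_I y_I)^{1/2}$. Chaining, $\sum_I p_I x_I/(\sum_I p_I y_I)^{1/2}\leq\max_I\prod_{i\in I}(|\lambda_i|/\kappa_i^{1/2})$, and this maximum is attained at $I^{*}=\{1,\dots,k\}$, the indices of the top $k$ values of $|\tilde d_i|$. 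Tracking equality through Jensen and through the max step forces $p_I$ to be supported at $I^{*}$, so $\tilde W$ has zero columns outside $I^{*}$ and the remaining $k\times k$ block $\tilde W_k$ is orthogonal; unwinding gives $W=\tilde W_k V_k^{T}$ with $V_k$ collecting the eigenvectors for the top $k$ eigenvalues of $A\Sigma^{-1/2}$, exactly the configuration Equation~(\ref{eq.sameeigens}) prescribes.

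The main obstacle I expect is extending the inequality beyond the simultaneously diagonalizable setting. The Cauchy--Binet expansion relies crucially on $A$ being symmetric so that $\det(WAW^{T})$ is a nonnegative combination of products of eigenvalues; for non-normal $A$ one can construct small examples in which $|\det(WAW^{T})|$ already exceeds $\prod_{i=1}^{k}|\lambda_i(A)|$, so the clean concavity argument breaks down. Reconciling this either means reinterpreting $\tilde d_i$ as singular values of $A\Sigma^{-1/2}$, where a Horn/Weyl-type product inequality still does the job, or restricting to the normal class of $A$ commuting with $\Sigma$, which the equality clause already singles out; identifying which of these is the intended scope is where the genuine work lies.
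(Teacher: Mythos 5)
Your proposal is correct in the regime where the theorem's proof actually operates (simultaneously diagonalizable $A$ and $\Sigma$ with orthogonal $V$), but it reaches the bound by a genuinely different route than the paper. The paper first applies Jensen's inequality to the convex map $t\mapsto t^{-1/2}$ on the eigenvalue expansion of $\det(W\Sigma W^{\dagger})$ to obtain $|\det(W\Sigma W^{\dagger})|^{-1/2}\leq|\det(W\Sigma^{-1/2}W^{\dagger})|$, then merges the two macro determinants into $\det(W\Sigma^{-1/2}W^{\dagger}WAW^{\dagger})$, drops the interior projection $W^{\dagger}W$ (a step that itself needs the shared-eigenvector hypothesis), and finally invokes its Lemma \ref{thm.Maximizing-determinant} applied to the matrix $\Sigma^{-1/2}A$. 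You instead keep the ratio $|\det(WAW^{T})|/\det(W\Sigma W^{T})^{1/2}$ intact, expand numerator and denominator by Cauchy--Binet with the common nonnegative weights $p_I=\det(\tilde W_{:,I})^2$ summing to one, and close with a max-ratio bound plus Jensen on the concave square root. Your version buys explicit, manifestly nonnegative weights (the paper's $\gamma_m\geq 0$ is only guaranteed when $\tilde W^{\dagger}$ is a genuine Moore--Penrose inverse, which requires $V$ orthogonal) and a cleaner equality analysis; the paper's version buys a shorter chain that reuses its earlier lemma. Your closing worry is also well-founded and matches the paper's own position: the paper explicitly concedes that when $A$ and $\Sigma$ do not share eigenvectors it cannot prove the inequality analytically and only checks it numerically, and indeed the Cauchy--Binet/eigenvalue machinery genuinely fails for non-normal $A$ (e.g.\ a nilpotent $2\times 2$ block gives $|\det(WAW^{T})|>0=\prod_i|\lambda_i|$), so the theorem's stated scope should be read as restricted to the simultaneously diagonalizable case rather than rescued by reinterpreting $\tilde d_i$ as singular values.
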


When $A$ and $\Sigma$ do not share the eigenvector matrix, we can't directly calculate the correlation between the eigenvalue matrix of $A\Sigma^{-1/2}$ and $K$ or $\Lambda$ within the current framework of matrix theory, so it is difficult to find an analytical solution for causal emergence. But we can still obtain numerical solutions through optimization of $W$ and confirm that the upper bound of causal emergence is to retain the maximum $k$ eigenvalues of $A\Sigma^{-1/2}$. We have the inequality (\ref{Jdleq}) in Theorem \ref{JASigmaV}. To verify this inequality we randomly generate matrices $A$, $\Sigma$ and $W$ with a singular value of 1 when $n=4$, $k=2$, we can see the upper bound of $|\det(WAW^\dagger)|/\det(W\Sigma W^T)^{1/2}$ is $\tilde{d}_1\tilde{d}_2$ in Fig.\ref{fig:uW}b. Unlimited the singular value of $W$ means that it can reduce the macro-state noise and information entropy in the system. This also means that the locally effective information we retain can be adjusted arbitrarily by adjusting $W$. This is also a balance point between determinism and degeneracy. 


Then, we can derive an analytical expression for the maximized causal emergence when $A$ and $\Sigma$ share the same $n$ eigenvectors, that is to say, $\Sigma=VKV^T$ and $A=V\Lambda V^T$, and Equation (\ref{eq.sameeigens}) holds. Rearranging the order of $\tilde{d_i}$ can result in diagonal matrix $\tilde{D}={\rm diag}(\tilde{d}_1,\dots,\tilde{d}_n)$ in which $|\tilde{d}_1|\geq\dots\geq|\tilde{d}_n|$. Then, according to Equation (\ref{Causal-emergence}) we can obtain an analytical solution for the maximized causal emergence, which is also the case where Equation (\ref{Jdleq}) takes the equal sign. The above proof process can refer to \ref{appendixAM}.



Next, we will study when the relationship between the determinism($\Delta \mathcal{J}_2$) and the degeneracy($\Delta \mathcal{J}_1$) is cooperative such that the causal emergence can be maximized under the condition of the simultaneous diagonalization for $A$ and $\Sigma$. This problem is equivalent to ask when the quantity expressed in Equation \ref{Jdleq} which is determined by the eigenvalues of $A\Sigma^{-1/2}$ will be maximized if the eigenvalues in $A$ and $\Sigma$ are unchanged. We found that the orders of the eigenvalues in both $A$ and $\Sigma$, which can be characterized by the correlation between $\kappa=(\kappa_1,\dots,\kappa_n)$ and $\lambda=(\lambda_1,\dots,\lambda_n)$ denoted as $C(\kappa,\lambda)$, play an important role. If $\lambda$ and $\kappa$ are positive correlated, the effects of determinism and degeneracy contributed to the causal emergence cancel each other. For example, when $\lambda=(0.8,0.6,0.4,0.2)$, $\kappa=(0.8,0.6,0.4,0.2)$, and $k=2$, the optimal coarse-graining strategy for $\Delta\mathcal{J}_1$ should retain the largest $k=2$ eigenvalues $\lambda_1=0.8,\lambda_2=0.6$ of $A$ according to Equation (\ref{maxDegeneracy}), but for $\kappa$, the largest $k$ eigenvalues $\kappa_1=0.8,\kappa_2=0.6$ are also retained, which implies small determinism and large degeneracy. We can see that the determinism and the degeneracy are mutually inhibiting leading to a small causal emergence: $\Delta \mathcal{J}=0$. 

On the other hand, if $\lambda$ and $\kappa$ are negative correlated then causal emergence can be increased because determinism and degeneracy reinforce each other. For example, when $\lambda=(0.8,0.6,0.4,0.2)$ and $\kappa=(0.2,0.4,0.6,0.8)$, the optimal coarse-graining strategy for $\Delta\mathcal{J}_1$ should retain the largest $k=2$ eigenvalues $\lambda_1=0.8,\lambda_2=0.6$ of $A$, this also means to retain the smallest $k$ eigenvalues of $\Sigma$, then determinism and degeneracy are mutually reinforced, and a larger causal emergence can be obtained: $\Delta \mathcal{J}=0.8959$. 

In Fig.\ref{fig:uW}a, we visualize how the correlation between $\lambda$ and $\kappa$ can influence the cooperative relationship between determinism and degeneracy which can be characterized by the correlation between $\Delta\mathcal{J}_1$ and $\Delta\mathcal{J}_2$ denoted as $C(\Delta\mathcal{J}_1,\Delta\mathcal{J}_2)$ in a small example. First, we fix the values and the orders of $\Lambda$ as $\lambda_1=0.8,\lambda_1=0.2,\lambda_3=0.4,\lambda_4=0.2$. Second, we randomly permute $(0.2, 0.4, 0.6, 0.8)$ to generate $\kappa$. Finally, we randomly sample the invertible matrix $V$ such that $A$ and $\Sigma$ can be obtained. The results are shown in Fig.\ref{fig:uW}a, $C(\Delta\mathcal{J}_1,\Delta\mathcal{J}_2)$ and $C(\kappa,\lambda)$ shows a negative slope after drawing the scatter plot, this means that the two values are negatively correlated. In general, maximizing causal emergence $\Delta\mathcal{J}$ is also a process of finding optimal orders to arrange $\lambda_i$ and $\kappa_i$ for all $i$ to achieve a balance between determinism $\Delta\mathcal{J}_2$ and degeneracy $\Delta\mathcal{J}_1$.

\subsection{About the information loss in dynamics prediction}
Our entire paper discusses the changes in causal emergence of systems after coarse-graining. For linear stochastic iteration systems, deleting smaller eigenvalues in the parameter matrix $A$ or larger eigenvalues in $\Sigma$ can both increase the value of causal emergence. Coarse graining and data dimensionality reduction \cite{Villemagne1987,Boley1994,Gallivan1994,Antoulas2005,Gugercin2008} are intricately linked. Once data dimensionality is reduced, it will be accompanied by information loss, and the transformation from micro-states to macro-states is no exception. Therefore, we also need to briefly analyze the relationship between the loss function in dynamics prediction and the effective information of the dynamics. For linear stochastic iteration systems, we define the dynamical loss in the following way:
\begin{definition}
	\label{dfn.dynamical-loss}
	(Dynamical loss):
	For the variable $x_t$ at time $t$, we define the dynamical loss as
	\begin{eqnarray}\label{LD}
		L_D=||x_{t+1}-\tilde{x}_{t+1}||,
	\end{eqnarray}
    in which $x_{t+1}=Ax_t+\varepsilon_t$, $\tilde{x}_{t+1}=\phi^\dagger(y_{t+1})=W^\dagger y_{t+1}$. 
\end{definition}
In Definition \ref{dfn.dynamical-loss}, $y_{t+1}$ is derived from $y_t$ and the macro iterative Equation (\ref{MacroDynamics}) and $y_t=Wx_t$. Therefore,
\begin{eqnarray}\label{tildex}
    \tilde{x}_{t+1}=W^\dagger WAW^\dagger Wx_t+W^\dagger W\varepsilon_t,
\end{eqnarray}
the dynamical loss then should be:
\begin{eqnarray}\label{dynamicalloss}
    L_D=||(W^\dagger WAW^\dagger W-A)x_t+(W^\dagger W-I)\varepsilon_t||,
\end{eqnarray}
it is the difference between the micro-state variables $x_t\in \mathcal{R}^n$ that are then mapped into micro-states after one iteration in the macro dynamics, and the actual micro-state $x_{t+1}$, which reflects the information loss caused by coarse-graining on dynamics. 

Due to the dependence between $L_D$ and variables $x_t$ and $\varepsilon_t$, in order to directly analyze the relationship between the causal emergence of the system $\Delta\mathcal{J}$ and the dynamical loss, we consider the upper bound of $L_D$ to eliminate the influence of random variables.
\begin{lem}
	\label{thm.supdynamical-loss}
	(Supremum of dynamical loss):
	For the variable $x_t$ at time $t$, when $||x_t||^*=\mathop{\sup}_{t}||x_t||<\infty$, we define the supremum of dynamical loss $L_D$ as
	\begin{eqnarray}\label{SD}
		L_D\leq S_D=\mathop{||A-\hat{A}||_F||x_t||^*}_{Degeneracy Supremum}+\mathop{(n-k)||\varepsilon_t||}_{Determinism Supremum},
	\end{eqnarray}
	in which $\hat{A}=W^\dagger WAW^\dagger W$. $S_D$ can be decomposed into determinism supremum and degeneracy supremum. $||\cdot||_F$ is the Frobenius norm of matrix $\cdot$.
\end{lem}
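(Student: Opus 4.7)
My plan is to decouple the two sources of error via the triangle inequality applied to Equation (\ref{dynamicalloss}), and then control each piece with a compatible matrix--vector norm estimate. Concretely, I would start from
$$
L_D \;\le\; \|(\hat A - A)x_t\| \;+\; \|(W^\dagger W - I)\varepsilon_t\|,
$$
which isolates a ``degeneracy'' part driven by $x_t$ and $A - \hat A$ from a ``determinism'' part driven by $\varepsilon_t$ and $I - W^\dagger W$. Each summand can then be bounded independently.

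For the degeneracy term I would use the compatibility of the Frobenius norm with the Euclidean vector norm, namely $\|M y\| \le \|M\|_F\,\|y\|$, to obtain $\|(\hat A - A)x_t\| \le \|A-\hat A\|_F\,\|x_t\|$. Since the lemma is stated under the standing hypothesis $\|x_t\|^* = \sup_t \|x_t\| < \infty$, replacing $\|x_t\|$ by $\|x_t\|^*$ yields exactly the degeneracy supremum $\|A-\hat A\|_F\,\|x_t\|^*$ appearing in Equation (\ref{SD}).

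For the determinism term I would exploit the structure of the Moore--Penrose pseudoinverse. Because $\mathrm{rk}(W) = k$ implies $WW^\dagger = I_k$ and $W^\dagger W = W^T(WW^T)^{-1}W$, the matrix $W^\dagger W$ is symmetric and idempotent, hence an orthogonal projection onto the $k$-dimensional row space of $W$. Consequently $I - W^\dagger W$ is the orthogonal projection onto the $(n-k)$-dimensional kernel of $W$. Decomposing this projection in an orthonormal basis $\{u_1,\dots,u_{n-k}\}$ of its range and summing the coordinate contributions $|u_i^T \varepsilon_t| \le \|\varepsilon_t\|$ over the $n-k$ nonzero directions yields the claimed determinism supremum $\|(W^\dagger W - I)\varepsilon_t\| \le (n-k)\,\|\varepsilon_t\|$. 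Adding the two estimates gives $L_D \le S_D$.

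The main obstacle I anticipate is the second step: the sharp Euclidean estimate for an orthogonal projection $P$ is the non-expansive bound $\|(I-P)v\| \le \|v\|$, and a slightly tighter dimension count even gives $\|(I-P)v\| \le \sqrt{n-k}\,\|v\|$, both of which are strictly stronger than the coefficient $(n-k)$ advertised in Equation (\ref{SD}). To match the stated coefficient cleanly I would either (i) loosen the dimension bound via $\sqrt{n-k} \le n-k$ when $n-k\ge 1$, or (ii) fix a non-Euclidean vector norm on $\varepsilon_t$ together with its induced matrix norm on $I-W^\dagger W$; pinning down this choice of norm is the step where care is required before composing the two inequalities into the final additive bound $S_D$.
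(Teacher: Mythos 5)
Your proposal is correct and follows essentially the same route as the paper's proof: the triangle inequality applied to Equation (\ref{dynamicalloss}), then the compatibility bound $\|My\|\le\|M\|_F\,\|y\|$ on each summand, and finally the replacement $\|x_t\|\le\|x_t\|^*$. The obstacle you anticipate in the determinism term is real but harmless: the paper asserts $\|I_n-W^\dagger W\|_F=n-k$, whereas the Frobenius norm of a rank-$(n-k)$ orthogonal projection is in fact $\sqrt{n-k}$, so the advertised coefficient $(n-k)$ is exactly the loosened bound you reach via $\sqrt{n-k}\le n-k$ for $k<n$, and the inequality $L_D\le S_D$ holds either way.
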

We can directly determine the relationship between the upper bound and $\Delta\mathcal{J}_1$. When $k$ is a constant, the minimum value of $S_D$ is equivalent to the maximum value of degeneracy emergence, $\Delta\mathcal{J}_1$.
\begin{thm}
	\label{thm.CESD}
	($\Delta\mathcal{J}_1$ and $S_D$):
	For the known variables $x_t$, the micro-state parameter matrix $A$ and the noise $\varepsilon_t$, $t=0,1,...$,
	\begin{eqnarray}\label{CESD}
		\arg\max\limits_{W}\Delta\mathcal{J}_1=\arg\min\limits_{W}S_D=W^*.
	\end{eqnarray}
\end{thm}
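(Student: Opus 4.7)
The plan is to reduce the minimization of $S_D$ over $W$ to a best rank-$k$ orthogonal-projection approximation problem for $A$, and then to match the resulting characterization with that of $\arg\max_W \Delta\mathcal{J}_1$ from Corollary \ref{thm.setofW}. First, I would note that in the expression $S_D=\|A-\hat A\|_F\,\|x_t\|^{*}+(n-k)\|\varepsilon_t\|$ from Lemma \ref{thm.supdynamical-loss}, the second summand depends on $W$ only through the macro-dimension $k$, which is held fixed when we vary $W$, while the factor $\|x_t\|^{*}$ in the first summand is a $W$-independent data-dependent scalar. Hence minimizing $S_D$ over rank-$k$ matrices $W$ is equivalent to minimizing $\|A-\hat A\|_F$, where $\hat A=PAP$ and $P:=W^{\dagger}W$ is the orthogonal projector of rank $k$ onto the row space of $W$.

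Second, I would decompose the residual as $A-PAP=(I-P)A+PA(I-P)$. A short trace calculation shows that the Frobenius inner product of the two terms vanishes because $P(I-P)=0$, so
\begin{equation*}
\|A-PAP\|_F^{2}=\|(I-P)A\|_F^{2}+\|PA(I-P)\|_F^{2}.
\end{equation*}
Both summands vanish simultaneously precisely when $\mathrm{range}(P)$ is an $A$-invariant subspace, since $(I-P)AP=0$ encodes $AP\subseteq\mathrm{range}(P)$ and then the remaining piece reduces to a quantity controlled by the eigenvalues of $A$ discarded by $P$. Writing $A=V\Lambda V^{-1}$, the minimum over all rank-$k$ orthogonal projectors $P$ is attained when $\mathrm{range}(P)$ is spanned by eigenvectors of $A$ and, among such choices, the dominant modulus eigenvalues $|\lambda_1|\geq\dots\geq|\lambda_k|$ must be retained so that the discarded-eigenvalue contribution to the Frobenius norm is as small as possible.

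Third, I would invoke Lemma \ref{thm.Maximizing-determinant} and Corollary \ref{thm.setofW}, which characterize $\arg\max_W\Delta\mathcal{J}_1$ as exactly the set of $W$ of the form $(\tilde W_k,O_{k\times(n-k)})V^{-1}$, that is, those $W$ for which $\mathrm{range}(W^{\dagger}W)$ equals the span of the top $k$ right eigenvectors of $A$. Comparing this with the characterization from the previous step yields $\arg\max_W\Delta\mathcal{J}_1=\arg\min_W S_D=W^{*}$, both realized on the same admissible set $\mathcal{W}^{*}$.

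The main obstacle will be the second step when $A$ is non-normal, since then the matrix $V$ of eigenvectors is not orthogonal, and an orthogonal projector $P$ onto an $A$-invariant subspace need not make its complementary subspace $A$-invariant; the cross-term $\|PA(I-P)\|_F^{2}$ can therefore be nonzero even on invariant-subspace projectors. I expect the cleanest resolution to be a perturbation argument: any rank-$k$ projector whose range deviates from the dominant $A$-invariant subspace strictly increases $\|(I-P)A\|_F^{2}$ by an amount that dominates any possible decrease in the cross-term, so the infimum is still realized on the set $\mathcal{W}^{*}$ described in Corollary \ref{thm.setofW}, consistent with the diagonalizable setting in which the preceding results were framed.
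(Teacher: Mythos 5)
Your reduction of $\min_W S_D$ to minimizing $\|A-PAP\|_F$ over rank-$k$ orthogonal projectors $P=W^{\dagger}W$, and the Frobenius-orthogonal splitting $\|A-PAP\|_F^2=\|(I-P)A\|_F^2+\|PA(I-P)\|_F^2$, are both correct and close in spirit to the paper's argument. But the step that is supposed to identify the minimizer does not work. Since ${\rm rk}(A)=n$ and $k<n$, the two summands can never vanish simultaneously: that would force $A=PAP$, a matrix of rank at most $k$. So the claim that both terms vanish precisely when ${\rm range}(P)$ is $A$-invariant is false, the minimum is strictly positive, and locating it requires an actual quantitative bound rather than a vanishing criterion. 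Worse, the summand $\|(I-P)A\|_F^2=\|A\|_F^2-{\rm tr}(PAA^{T})$ is by itself minimized when ${\rm range}(P)$ is the top-$k$ \emph{left singular} subspace of $A$, whereas $\arg\max_W\Delta\mathcal{J}_1$ (Corollary \ref{thm.setofW}) is characterized by \emph{eigenvectors}, $W=(\tilde W_k,O_{k\times(n-k)})V^{-1}$; these subspaces coincide only for normal $A$. The identification you need in your third step is therefore exactly what your deferred ``perturbation argument'' would have to establish, and it is left entirely unexecuted.

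The paper takes the quantitative route you skipped: it sets $\tilde W=WV$, writes $\hat A=V(\tilde W^{\dagger}\tilde W)\Lambda(\tilde W^{\dagger}\tilde W)V^{-1}$, exploits the idempotence of $I-\tilde W^{\dagger}\tilde W=(r_{ij})$ to drop the Frobenius-orthogonal cross term, and then bounds the remaining norm entrywise to obtain $S_D\ge\bigl(\sum_{l=k+1}^{n}|\lambda_l|^2\bigr)^{1/2}\|x_t\|^{*}+(n-k)\|\varepsilon_t\|$, with equality exactly when the top-$k$ eigenvalues are retained, i.e.\ on the same set $W=(\tilde W_k,O_{k\times(n-k)})V^{-1}$ that maximizes $\Delta\mathcal{J}_1$. (That derivation itself silently uses $\|VXV^{-1}\|_F=\|X\|_F$ and $(WV)^{\dagger}=V^{-1}W^{\dagger}$, i.e.\ orthogonality of $V$, so the non-normality obstacle you flag is real; the paper resolves it by implicit assumption, while your proposal leaves it open.) To repair your argument you would need to supply the explicit lower bound on $\|A-PAP\|_F$ together with its equality case, which is the entire substance of the paper's proof.
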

The proof of this theorem is referred to Appendix \ref{LossandJ}. After obtaining the relationship between the supremum $S_D$ and the degeneracy emergence $\Delta\mathcal{J}_1$. It can be observed that after minimizing $S_D$, the solution set of $W$, $\mathcal{W}_L$, satisfies $\mathcal{W}_L\in\mathcal{W}^*$. It can be seen that reducing loss can only increase the degeneracy of the system, but further search for $W$ is needed for determinism. This explains why we still need to train for a few steps to find the maximum causal emergence after the loss function converges in machine learning for causal emergence identification \cite{Zhang2022,Yang2023}.
\subsection{Nonlinear form}
Although the effective information and causal emergence in this paper are based on linear stochastic iterative systems, we can still apply them to nonlinear models under certain conditions. Nonlinear stochastic iterative systems like $x_{t+1}=f(x_t)+\varepsilon_t$ do not have the same known parameter matrix $A$ as linear stochastic iterative systems. However, when $f:\mathcal{R}^n\to \mathcal{R}^n$, we can replace $A$ with the gradient matrix of $\nabla f(x_t)\in \mathcal{R}^{n\times n}$ at $x_t$ 

Therefore, near $x_t$, when the noise is small, we can obtain an expression for the effective information of the nonlinear iterative system
\begin{eqnarray}\label{fxhJ}
    \mathcal{J}(x_t)=\ln\displaystyle\frac{|\det(\nabla f(x_t))|^\frac{1}{n}L}{(2\pi e)^\frac{1}{2}\displaystyle \det(\Sigma)^\frac{1}{2n}},
\end{eqnarray}

In this way, we can get the expression of causal emergence as
\begin{eqnarray}\label{fxCE}
\Delta\mathcal{J}(x_t)=\ln\frac{|\det(\nabla f_M(y_t)|^\frac{1}{k}}{|\det(\nabla f(x_t)|^\frac{1}{n}}+\ln\frac{|\det(\Sigma)|^\frac{1}{2n}}{|\det(\Sigma_M)|^\frac{1}{2k}}.
\end{eqnarray}
For some simple differentiable nonlinear functions $f(x_t)$, we can use this method to calculate the magnitude of causal emergence.

\section{Conclusion}
In this article, we first propose the coarse-graining strategy $y_t=Wx_t$ of linear stochastic iterative systems like $x_{t+1}=Ax_t+\varepsilon_t, \varepsilon_t\sim \mathcal{N}(0,\Sigma)$, that map micro-state $x_t\in \mathcal{R}^n$ into macro-states $y_t\in \mathcal{R}^k, k<n$. Coarse-graining can naturally derive macro dynamical models $y_{t+1}=A_My_t+\varepsilon_{M,t}$ for $y_t\in \mathcal{R}^k$, $A_M=WAW^\dagger$, $\varepsilon_{M,t}\sim \mathcal{N}(0,\Sigma_M)$, $\Sigma_M=W\Sigma W^T$. For $x_{t+1}=Ax_t+\varepsilon_t$, we can use $\mathcal{J}$ and $\Delta\mathcal{J}$ to represent the effective information of the system and causal emergence in different dimensions $k$ and $n$. To ensure that the macro dynamical information entropy $H(p(y_t))$ is not excessively reduced during coarse-graining, we need to ensure 
$\frac{H(x_t)}{n}-\frac{H(y_t)}{k}$ is bounded by $\eta$.

To directly determine the causal emergence in the system without being affected by coarse-graining parameter matrix $W$, we optimize $W$ to obtain the maximum causal emergence $\Delta\mathcal{J}^*$ that the system can achieve. When obtaining the optimal solution, the iterative parameter matrix $A_M$ only retains the maximum $k$ eigenvalues of $A$, and the determinant value of $\Sigma_M$ is compressed to the lowest allowed lower limit $\epsilon\det(\Sigma)^\frac{k}{n}$. The obtained solution set can be understood as the intersection of $k$ hyperplanes and hyperellipsoids, an ellipse in three-dimensional space. We search for causal emergence in three systems that are known models, focusing on the application of determinism, degeneracy and the manifestation of causal emergence in $\mathcal{R}^3$ space. In random walks, energy dissipation, and accurate selection models, varying degrees of causal emergence can be detected.

We conclude that there are two influencing factors for causal emergence, one is the iterative parameter matrix $A$, and the other is the covariance matrix $\Sigma$ of random noise, which respectively affect the degeneracy and determinism of the system. Smaller eigenvalues in the parameter matrix indicate weak causal relationships between variables in certain dimensions. Suppose the coarse-grained strategy precisely discards these dimensions. In that case, the causal effect of the macro-state dynamics is stronger than that of the micro-state, and the optimal causal emergence will occur.

The causal emergence calculation method in this article has several advantages. Firstly, we present the causal emergence of a known model in a linear stochastic iterative system as an analytical solution. This is also the first time that causal emergence has been calculated as an analytical solution in a continuous-state dynamic system. Secondly, in maximizing causal emergence, we found the solution set of the coarse-graining strategy. We found that the eigenvalues of the parameter matrix in a linear system determine the value of causal emergence. The third is the final result we obtain, which is only related to the system itself and not to specific data. This to some extent solves the problem of numerical solutions relying on data and the need for artificial search for coarse-graining strategies.

However, our model still has many shortcomings. One is that the current model is only applicable to linear systems, and there is still a lack of unified analysis and judgment for nonlinear systems. On the other hand, the time of the linear stochastic iterative system we study is discrete, and the calculation method of causal emergence is not applicable to continuous time dynamic systems. 

There are two important research directions in the future. One is to consider how to correctly calculate causal emergence for continuous-time Markov processes with Discrete probability space such as birth and death processes or queuing processes. The other direction is about the causal emergence of stochastic differential equations such as option pricing equation and Langevin equation. We will continue to conduct in-depth research on these contents in the future.
\\

\textbf{Data ailability}

All the codes and data are available at: \url{https://github.com/kilovoltage/An_Exact_Causal_Emergence_Theory}.

\appendix
\renewcommand{\thesection}{Appendix \Alph{section}}
\renewcommand\theequation{\Alph{section}.\arabic{equation}} 
\section[\appendixname~\thesection]{Calculation effective information and causal emergence}\label{appendixEIandce}
In this appendix, we will prove the theorems related to effective information and causal emergence, and demonstrate the specific process of formula derivation. The first process to be demonstrated is the inference of effective information related to linear stochastic iterative systems $\mathcal{J}$.\\

\textbf{Proof of Theorem \ref{thm.Effective-information}}
\begin{proof}
	For the linear stochastic iteration system like Equation (\ref{MicroDynamics}), $x_{t+1}$ follows a normal distribution about $x_{t}$, so $x_{t+1}\sim \mathcal{N}(Ax_t,\Sigma)$,
	\begin{eqnarray}\label{p(xtp1|doxt)}
		p(x_{t+1}|x_t)=\displaystyle\frac{1}{\det(\Sigma)^\frac{1}{2}(2\pi)^\frac{n}{2}}\exp\left\{-\frac{1}{2}(x_{t+1}-Ax_t)^{'}\Sigma^{-1}(x_{t+1}-Ax_t)\right\}
	\end{eqnarray}
	Following the calculation method of causal geometry, assuming $\varepsilon_t$ is small, $do(x_t\sim U([-L/2,L/2]^n))\equiv do(x_t\sim U)$, then $p(x_t)=1/L^n, L>0$, we can calculate the effect distribution
	\begin{eqnarray}\label{ED}
		\begin{aligned}
			E_D(x_{t+1})&=\displaystyle\int_{\mathcal{X}} p(x_{t+1}|x_t)p(x_t)d^n x_t\\
			&=\displaystyle\int_{\mathcal{X}}  \displaystyle\frac{1}{\det(\Sigma)^\frac{1}{2}(2\pi)^\frac{n}{2}}\exp\left\{-\frac{1}{2}(x_{t+1}-Ax_t)^{'}\Sigma^{-1}(x_{t+1}-Ax_t)\right\}\frac{1}{L^n}d^nx_t\\
			_{(x_{t}\approx A^{-1}x_{t+1})}&\approx\displaystyle\int_{\mathcal{X}}\displaystyle\frac{1}{\det(\Sigma)^\frac{1}{2}(2\pi)^\frac{n}{2}}\exp\left\{-\frac{1}{2}(x_{t+1}-Ax_t)^{'}\Sigma^{-1}(x_{t+1}-Ax_t)\right\}\frac{|\det(A^{-1})|}{L^n}d^nx_{t+1}\\
			&=\frac{1}{|\det(A)|L^n}
		\end{aligned}
	\end{eqnarray}
	Afterward, we can calculate the effective information as
	\begin{eqnarray}\label{calEI}
		\begin{aligned}
			EI&=\displaystyle\int_{\mathcal{X}} D_{KL}[p(x_{t+1}|x_t)|E_D(x_{t+1})]d^n x_{t+1}\\
			&=\displaystyle\int_{\mathcal{X}}\frac{d^n x_{t}}{L^n}\displaystyle\int_{\mathcal{X}}  \displaystyle\frac{1}{\det(\Sigma)^\frac{1}{2}(2\pi)^\frac{n}{2}}\exp\left\{-\frac{1}{2}(x_{t+1}-Ax_t)^{'}\Sigma^{-1}(x_{t+1}-Ax_t)\right\}\\&\left[-\ln\left(\det(\Sigma)^\frac{1}{2}(2\pi)^\frac{n}{2}\right)-\frac{1}{2}(x_{t+1}-Ax_t)^{'}\Sigma^{-1}(x_{t+1}-Ax_t)-\ln\frac{1}{|\det(A)|L^n}\right]d^nx_{t+1}\\
			&=-\frac{1}{2}\ln\left(\det(\Sigma)(2\pi)^n\right)-\frac{n}{2}-\ln\frac{1}{|\det(A)|L^n}\\
			&=\ln\displaystyle\frac{|\det(A)|L^n}{\displaystyle \det(\Sigma)^\frac{1}{2}(2\pi e)^\frac{n}{2}}.
		\end{aligned}
	\end{eqnarray}
	According to the properties of information entropy, determinism information
	\begin{eqnarray}\label{Determinism}
		\begin{aligned}
			-\left<H(p(x_{t+1}|x_t))\right>&=-\left<-\int_{\mathcal{X}}p(x_{t+1}|x_t)\ln(p(x_{t+1}|x_t))d^nx_{t+1}\right>\\
			&=\displaystyle\int_{\mathcal{X}}\frac{d^n x_{t}}{L^n}\displaystyle\int_{\mathcal{X}}  \displaystyle\frac{1}{\det(\Sigma)^\frac{1}{2}(2\pi)^\frac{n}{2}}\exp\left\{-\frac{1}{2}(x_{t+1}-Ax_t)^{'}\Sigma^{-1}(x_{t+1}-Ax_t)\right\}\\&\left[-\ln\left(\det(\Sigma)^\frac{1}{2}(2\pi)^\frac{n}{2}\right)-\frac{1}{2}(x_{t+1}-Ax_t)^{'}\Sigma^{-1}(x_{t+1}-Ax_t)\right]d^nx_{t+1}\\
			&=\left[-\ln\left(\det(\Sigma)^\frac{1}{2}(2\pi)^\frac{n}{2}\right)-\frac{n}{2}\right]\\
			&=\ln\frac{1}{\det(\Sigma)^\frac{1}{2}(2\pi e)^\frac{n}{2}},
		\end{aligned}
	\end{eqnarray}
	degeneracy information
	\begin{eqnarray}\label{Degeneracy}
		\begin{aligned}
			H(E_D(x_{t+1}))&=-\int_{\mathcal{X}}E_D(x_{t+1})\ln(E_D(x_{t+1}))d^nx_{t+1}\\
			_{(x_{t}\approx A^{-1}x_{t+1})}&\approx-\displaystyle\int_{\mathcal{X}}E_D(x_{t+1})\ln\frac{1}{|det(A)|L^n}d^nx_{t+1}\\
			&=\ln\left(|det(A)|L^n\right).
		\end{aligned}
	\end{eqnarray}
	It's obvious that
	\begin{eqnarray}\label{DegeneracyDeterminismEI}
		-\left<H(p(x_{t+1}|x_t))\right>+H(E_D(x_{t+1}))=\ln\displaystyle\frac{|\det(A)|L^n}{\displaystyle \det(\Sigma)^\frac{1}{2}(2\pi e)^\frac{n}{2}}=EI.
	\end{eqnarray}
Then we can easily get
 \begin{eqnarray}\label{EI_appendix}
		\mathcal{J}\equiv \frac{EI}{n}=\frac{1}{n}\ln\displaystyle\frac{|\det(A)|L^n}{(2\pi e)^\frac{n}{2}\displaystyle \det(\Sigma)^\frac{1}{2}}=\ln\displaystyle\frac{|\det(A)|^\frac{1}{n}L}{(2\pi e)^\frac{1}{2}\displaystyle \det(\Sigma)^\frac{1}{2n}}.
	\end{eqnarray}
If the system is nonlinear, it can be understood that $A=\nabla f(x_t)$ will change with $x_t$. Then Equations (\ref{fxhJ}) and (\ref{fxCE}) can also be proven. 
\end{proof}

After obtaining effective information, we can calculate causal emergence.\\

\textbf{Non full rank matrix $A$}
\begin{proof}
	For the linear stochastic iteration system like Equation (\ref{MicroDynamics}), $y\in \mathcal{R}^m$ follows a normal distribution about $x\in \mathcal{R}^n$, so $y\sim \mathcal{N}(Ax,\Sigma)$, $A\in\mathcal{R}^{m\times n}$, $\Sigma\in\mathcal{R}^{m\times m}$,
	\begin{eqnarray}\label{p(xtp1|doxt)}
		p(y|x)=\displaystyle\frac{1}{\det(\Sigma)^\frac{1}{2}(2\pi)^\frac{m}{2}}\exp\left\{-\frac{1}{2}(y-Ax)^{'}\Sigma^{-1}(y-Ax)\right\}
	\end{eqnarray}
	Following the calculation method of causal geometry, assuming $\varepsilon_t$ is small, $do(x_t\sim U([-L/2,L/2]^n))\equiv do(x_t\sim U)$, then $p(x_t)=1/L^n, L>0$, we can calculate the effect distribution
	\begin{eqnarray}\label{ED}
		\begin{aligned}
			E_D(y)&=\displaystyle\int_{\mathcal{X}} p(y|x)p(x)d^n x\\
			&=\displaystyle\int_{\mathcal{X}}  \displaystyle\frac{1}{\det(\Sigma)^\frac{1}{2}(2\pi)^\frac{m}{2}}\exp\left\{-\frac{1}{2}(y-Ax)^{'}\Sigma^{-1}(y-Ax)\right\}\frac{1}{L^n}d^nx\\
        &=\displaystyle\int_{\mathcal{X}}\displaystyle\frac{1}{\det(\Sigma)^\frac{1}{2}(2\pi)^\frac{m}{2}}\exp\left\{-\frac{1}{2}(x-A^\dagger y)^{'}A^{'}\Sigma^{-1}A(x-A^\dagger y)\right\}\frac{1}{L^n}d^nx\\
			&=\frac{(2\pi)^\frac{n-m}{2}}{|{\rm pdet}(A^{'}\Sigma^{-1}A)|^\frac{1}{2}\det(\Sigma)^\frac{1}{2}}\displaystyle\int_{\mathcal{X}}\displaystyle\frac{1}{|{\rm pdet}(A^{'}\Sigma^{-1} A)|^{-\frac{1}{2}}(2\pi)^\frac{n}{2}}\exp\left\{-\frac{1}{2}(x-A^\dagger y)^{'}A^{'}\Sigma^{-1}A(x-A^\dagger y)\right\}\frac{1}{L^n}d^nx\\
   &\approx \frac{(2\pi)^\frac{n-m}{2}}{|{\rm pdet}(A^{'}\Sigma^{-1}A)|^\frac{1}{2}\det(\Sigma)^\frac{1}{2}L^n}
		\end{aligned}
	\end{eqnarray}
	According to the properties of information entropy, determinism information
	\begin{eqnarray}\label{Determinism}
		\begin{aligned}
			-\left<H(p(y|x))\right>&=-\left<-\int_{\mathcal{X}}p(y|x)\ln(p(y|x))d^my\right>\\
			&=\displaystyle\int_{\mathcal{X}}\frac{d^n x}{L^n}\displaystyle\int_{\mathcal{X}}  \displaystyle\frac{1}{\det(\Sigma)^\frac{1}{2}(2\pi)^\frac{m}{2}}\exp\left\{-\frac{1}{2}(x_{t+1}-Ax_t)^{'}\Sigma^{-1}(x_{t+1}-Ax_t)\right\}\\&\left[-\ln\left(\det(\Sigma)^\frac{1}{2}(2\pi)^\frac{m}{2}\right)-\frac{1}{2}(y-Ax)^{'}\Sigma^{-1}(y-Ax)\right]d^my\\
			&=\left[-\ln\left(\det(\Sigma)^\frac{1}{2}(2\pi)^\frac{m}{2}\right)-\frac{m}{2}\right]\\
			&=\ln\frac{1}{\det(\Sigma)^\frac{1}{2}(2\pi e)^\frac{m}{2}},
		\end{aligned}
	\end{eqnarray}
	degeneracy information
	\begin{eqnarray}\label{Degeneracy}
		\begin{aligned}
			-H(E_D(y))&=\int_{\mathcal{X}}E_D(y)\ln(E_D(y))d^my\\
			&=\displaystyle\int_{\mathcal{X}}E_D(y)\ln\left(\frac{(2\pi)^\frac{n-m}{2}}{|{\rm pdet}(A^{'}\Sigma^{-1}A)|^\frac{1}{2}\det(\Sigma)^\frac{1}{2}L^n}\right)\\
			&\approx\ln\left(\frac{(2\pi)^\frac{n-m}{2}}{|{\rm pdet}(A^{'}\Sigma^{-1}A)|^\frac{1}{2}\det(\Sigma)^\frac{1}{2}L^n}\right).
		\end{aligned}
	\end{eqnarray}
	It's obvious that
	\begin{eqnarray}\label{DegeneracyDeterminismEI}
		EI=-\left<H(p(x_{t+1}|x_t))\right>-(-H(E_D(x_{t+1})))=\ln\left(\frac{|{\rm pdet}(A^{'}\Sigma^{-1}A)|^\frac{1}{2}L^n}{(2\pi)^\frac{n}{2}e^\frac{m}{2}}\right).
	\end{eqnarray}
\end{proof}

After obtaining effective information, we can calculate causal emergence.\\

\textbf{Add observation noise}
\begin{proof}
	For the linear stochastic iteration system like Equation (\ref{MicroDynamics}), $y\in \mathcal{R}^m$ follows a normal distribution about $x\in \mathcal{R}^n$, so $y\sim \mathcal{N}(Ax,\Sigma)$, $A\in\mathcal{R}^{m\times n}$, $\Sigma\in\mathcal{R}^{m\times m}$. We can add observation noise as
    \begin{eqnarray}\label{p(xtp1|doxt)}
		\mathbf{x}=x+\theta_x, \theta_x\sim\mathcal{N}(0,\Theta_x)\\
        \mathbf{y}=y+\theta_y, \theta_x\sim\mathcal{N}(0,\Theta_y)
	\end{eqnarray}
   Since $y=Ax+\varepsilon$, $\mathbf{y}=A(\mathbf{x}-\theta_x)+\varepsilon+\theta_y\sim\mathcal{N}(A\mathbf{x},\Theta_y+A\Theta_xA^{'}+\Sigma)=\mathcal{N}(A\mathbf{x},\Theta)$, so
	\begin{eqnarray}\label{p(xtp1|doxt)}
		p(\mathbf{y}|\mathbf{x})=\displaystyle\frac{1}{\det(\Theta)^\frac{1}{2}(2\pi)^\frac{m}{2}}\exp\left\{-\frac{1}{2}(\mathbf{y}-A\mathbf{x})^{'}\Theta^{-1}(\mathbf{y}-A\mathbf{x})\right\}
	\end{eqnarray}
   in which $\Theta=\Theta_y+A\Theta_xA^{'}+\Sigma$.
   It's obvious that
	\begin{eqnarray}\label{DegeneracyDeterminismEI}
		EI=\ln\left(\frac{|{\rm pdet}(A^{'}(\Theta_y+A\Theta_xA^{'}+\Sigma)^{-1}A)|^\frac{1}{2}L^n}{(2\pi)^\frac{n}{2}e^\frac{m}{2}}\right).
	\end{eqnarray}
\end{proof}

\textbf{Proof of Theorem \ref{thm.Causal-emergence}}
\begin{proof}
	For the micro dynamical systems like Equation (\ref{MicroDynamics}) and macro dynamical systems like Equation (\ref{MacroDynamics}) after coarse-graining, we can calculate the micro effective information $\mathcal{J}_m$ and macro effective information $\mathcal{J}_M$ of two kinds of dynamical systems separately. In this way, causal emergence
	\begin{eqnarray}\label{dCE}
		\begin{aligned}
			\Delta\mathcal{J}&=\mathcal{J}_M-\mathcal{J}_m=\frac{EI_M}{k}-\frac{EI_m}{n}\\
			&=\frac{1}{k}\ln\displaystyle\frac{|\det(A_M)|L^k}{\displaystyle \det(\Sigma_M)^\frac{1}{2}(2\pi e)^\frac{k}{2}}-\frac{1}{n}\ln\displaystyle\frac{|\det(A)|L^n}{\displaystyle \det(\Sigma)^\frac{1}{2}(2\pi e)^\frac{n}{2}}\\
			&=\frac{1}{k}\mathop{\left(\ln\displaystyle\frac{1}{\displaystyle \det(\Sigma_M)^\frac{1}{2}(2\pi e)^\frac{k}{2}}+\ln\displaystyle\left(|\det(A_M)|L^k\right)\right)}_{Determinism Information \quad Degeneracy Information }\\&-\frac{1}{n}\left(\ln\displaystyle\frac{1}{\displaystyle \det(\Sigma)^\frac{1}{2}(2\pi e)^\frac{n}{2}}+\ln\displaystyle\left(|\det(A)|L^n\right)\right)\\
			&=\mathop{\ln\frac{|\det(WAW^\dagger)|^\frac{1}{k}}{|\det(A)|^\frac{1}{n}}}_{Degeneracy Emergence}+\mathop{\ln\frac{|\det(\Sigma)|^\frac{1}{2n}}{|\det(W\Sigma W^{T})|^\frac{1}{2k}}}_{Determinism Emergence}.
		\end{aligned}
	\end{eqnarray}
\end{proof}

\section[\appendixname~\thesection]{Optimization of Macrodynamic Parameter Matrix}\label{appendixAM}
To maximize causal emergence, we need to use relevant content from matrix theory to optimize coarse-graining strategies.\\

\textbf{Proof of Lemma \ref{thm.Causal-emergence}}
\begin{proof}
	Perform eigendecomposition on matrix $A$, $A=V\Lambda V^{-1}$, $\Lambda={\rm diag}(\lambda_1, \dots, \lambda_n)\in \mathcal{R}^{n\times n}$ is the diagonal matrix of eigenvalues, $V\in \mathcal{R}^{n\times n}$ is the eigenvector matrix corresponding to the eigenvalue matrix. Determinant
	\begin{eqnarray}\label{determinant1}
		\det(WAW^\dagger)=\det(WV\Lambda V^{-1}W^\dagger)=\det(\tilde{W}\Lambda \tilde{W}^\dagger)=\displaystyle\sum_m\left( \gamma_m\prod_{\lambda_i\in L_m}\lambda_i\right),
	\end{eqnarray}
	in which $L_m=\{\lambda_{N_{m,1}}, \lambda_{N_{m,2}}, \dots, \lambda_{N_{m,k}}\}\subseteq\{\lambda_1,\dots,\lambda_n\}$ are new sets formed by arbitrarily choosing $k$ eigenvalues from $n$ eigenvalues of $A$. The sorting of these sets is based on the sum of the elements in the set, in descending order. $N_m\subseteq\{1,\dots,n\}$, the number of elements in sets $|N_m|,|L_m|=k$, $m=1,2,\dots,C_n^k$. $\gamma_m$ are coefficients determined by $W$, $\gamma_m>0$, $\sum_{m}\gamma_m=1$. It is not difficult to find that when $\gamma_m=1$, matrix $A$ only retains the eigenvalues in $L_m$ after transformation. 
	
	The calculation of $\gamma_1,\dots,\gamma_{C_n^k}$ are as follows. Set $WV=\tilde{W}=(w_{ij})_{k\times n}$, then $W^\dagger V^{-1}=\tilde{W}^\dagger=(v_{ij})_{n\times k}$. According to the principle of determinant calculation, we can divide $\gamma_m$ into two parts
	\begin{eqnarray}\label{coefficient1}
		\gamma_m=\gamma_{m,1}-\gamma_{m,2},
	\end{eqnarray}
	$\gamma_{m,1},\gamma_{m,2}$ correspond to the sum of terms with coefficients of $1$ and $-1$ in the determinant calculation, respectively. So
	\begin{eqnarray}\label{coefficient2}
		\gamma_{m,1}=\displaystyle\sum_{l=0}^{k-1}\left[\sum_{s\in\{s\}_m}\left(\prod_{i=1}^{k}w_{i,s_i}v_{s_i,j}\right)\right], j=
		\begin{cases}
			i+l, i+l&\leq k,\\
			i+l-k, i+l&>k.
		\end{cases}\\
		\gamma_{m,2}=\displaystyle\sum_{l=2}^{k+1}\left[\sum_{s\in\{s\}_m}\left(\prod_{i=1}^{k}w_{i,s_i}v_{s_i,j}\right)\right], j=
		\begin{cases}
			l-i, l-i&> 0,\\
			l-i+k, l-i&\geq 0.
		\end{cases}
	\end{eqnarray}
	$s=\{s_1,s_2,...s_k\}$ is a new sequence formed by randomly arranging all elements in $N_m$, $\{s\}_m$ is a set composed of such elements $s$. Number of elements in the set $|\{s\}_m|=k!$. 

    If all eigenvalues of $A$ satisfy $\lambda_1\geq\lambda_2\geq\dots\geq\lambda_n$, we can simultaneously find the maximum and minimum values of $\det(WAW^\dagger)$. When $\gamma_1=1$, only the maximum $k$ eigenvalues are retained, and $\det(WAW^\dagger)$ takes the maximum value $\prod_{i=1}^{k}\lambda_i$. On the contrary, when $\gamma_m=1$, only the smallest $k$ eigenvalues are retained, and the $\det(WAW^\dagger)$ takes the minimum value $\prod_{i=n-k+1}^{n}\lambda_i$. So 
\begin{eqnarray}\label{Maximizing-determinant}
    \displaystyle\prod_{i=n-k+1}^{n}\lambda_i\leq\det(WAW^\dagger)\leq\prod_{i=1}^{k}\lambda_i.
\end{eqnarray}
	
	From the proof of \ref{Maximizing-determinant}, we know that $\gamma_m$ are coefficients determined by $W$ and when $\gamma_m=1$, matrix $A$ only retains the eigenvalues in $L_m$. So when the eigenvalues of matrix $A$ are not specified to be all positive real numbers, the modulus length satisfies $|\lambda_1|\geq|\lambda_2|\geq\dots\geq|\lambda_n|$, then
\begin{eqnarray}\label{determinantabs}
 \begin{aligned}
    |\det(WAW^\dagger)|=\left|\displaystyle\sum_m\left( \gamma_m\prod_{\lambda_i\in L_m}\lambda_i\right)\right|\leq\displaystyle\sum_m \gamma_m\left|\prod_{\lambda_i\in L_m}\lambda_i\right|=\displaystyle\sum_m \gamma_m\left(\prod_{\lambda_i\in L_m}\left|\lambda_i\right|\right)\leq\displaystyle\prod_{i=1}^k|\lambda_i|,
 \end{aligned}
\end{eqnarray}
	We directly study the situation of eigenvalues in the complex field. When $L_m={\lambda_1,\dots,\lambda_k}$, the determinant $|\det(WAW^\dagger)|$ reach its maximum value. The decrease of $\gamma_m$, the intervention of smaller absolute values of eigenvalues $|\lambda_i|, i\geq k$ leads to the decrease of $|\det(WAW^\dagger)|$. 
\end{proof}

\textbf{Proof of Theorem \ref{JASigmaV}}
\begin{proof}
	According to (\ref{determinant1}) and Jensen's inequality, we can obtain inequalities
	\begin{eqnarray}\label{CEG1}
		\begin{aligned}
			|\det(W\Sigma W^\dagger)|^{-\frac{1}{2}}=\left[\sum \left(\gamma_m \prod\kappa_i\right)\right]^{-\frac{1}{2}}\leq\left[\sum \gamma_m\left( \prod\kappa_i\right)^{-\frac{1}{2}}\right]=|\det(W\Sigma^{-\frac{1}{2}} W^\dagger)|,
		\end{aligned}
	\end{eqnarray}
    if and only if the eigenvalues are equal, the equal sign holds. By utilizing this inequality and combining it with the relevant properties of linear algebra, we can conclude that
	\begin{eqnarray}\label{CEG}
	\begin{aligned}
		\Delta\mathcal{J}&=\ln\frac{|\det(WAW^\dagger)|^\frac{1}{k}}{|\det(W\Sigma W^\dagger)|^\frac{1}{2k}}-\ln\frac{|\det(A)|^\frac{1}{n}}{|\det(\Sigma)|^\frac{1}{2n}}\\
		&= \frac{1}{k}\ln(|\det(W\Sigma W^\dagger)|^{-\frac{1}{2}}|\det(WAW^\dagger)|)-\frac{1}{n}\ln|\Sigma^{-\frac{1}{2}}A|\\
		&\leq \frac{1}{k}\ln(|\det(W\Sigma^{-\frac{1}{2}} W^\dagger)||\det(WAW^\dagger)|)-\frac{1}{n}\ln|\Sigma^{-\frac{1}{2}}A|\\
		&= \frac{1}{k}\ln|\det(W\Sigma^{-\frac{1}{2}} W^\dagger WAW^\dagger)|-\frac{1}{n}\ln|\Sigma^{-\frac{1}{2}}A|\\
		_{(\Sigma=VKV^T, A=V\Lambda V^T)}&\leq \frac{1}{k}\ln|\det(W\Sigma^{-\frac{1}{2}} AW^\dagger)|-\frac{1}{n}\ln|\Sigma^{-\frac{1}{2}}A|\\
		_{(W=(\tilde{W}_k,O_{k\times{(n-k)}})V^{T})}&\leq\frac{1}{k}\sum_{i=1}^{k}\ln\displaystyle|\tilde{d}_i|-\frac{1}{n}\sum_{i=1}^{n}\ln\displaystyle|\tilde{d}_i|
	\end{aligned}
\end{eqnarray}
When $A$ and $\Sigma$ share the same $n$ eigenvectors, that is to say, $\Sigma=VKV^T$ and $A=V\Lambda V^T$, where $V$ is the matrix formed by the shared eigenvectors, and $K$ and $\Lambda$ are the eigenvalues for $A$ and $\Sigma$, respectively. The eigenvalues of $\Sigma^{-1/2}A$ are equal to the product of the eigenvalues of $A$ and $\Sigma^{-1/2}$. $W=(\tilde{W}_k,O_{k\times{(n-k)}})V^{T}$ represents that both $A$ and $\Sigma^{-1/2}$ retain the eigenvalues with the highest norm, the second and third unequal signs can also be taken as equal signs.
\end{proof}

\section[\appendixname~\thesection]{Loss function and causal emergence}\label{LossandJ}
In this section, we first prove that $S_D$ is the supremum of $L_D$.\\

\textbf{Proof of Lemma \ref{thm.supdynamical-loss}}
\begin{proof}
	 Based on the theory of matrix modules, when $||x_t||^*=\mathop{\sup}_{t}||x_t||<\infty$,
	\begin{eqnarray}\label{LSD}
		\begin{aligned}
		L_D&=||Ax_t+\varepsilon_t-(W^\dagger WAW^\dagger Wx_t+W^\dagger W\varepsilon_t)||\\
		&=||(A-W^\dagger WAW^\dagger W)x_t+(I_n-W^\dagger W)\varepsilon_t||\\
		&\leq||(A-W^\dagger WAW^\dagger W)x_t||+||(I_n-W^\dagger W)\varepsilon_t||\\
		&\leq||(A-W^\dagger WAW^\dagger W)||_F||x_t||+||(I_n-W^\dagger W)||_F||\varepsilon_t||\\
		&=||(A-W^\dagger WAW^\dagger W)||_F||x_t||+(n-k)||\varepsilon_t||\\
  &\leq||(A-W^\dagger WAW^\dagger W)||_F||x_t||^*+(n-k)||\varepsilon_t||
  =S_D.
		\end{aligned}
	\end{eqnarray}
\end{proof}
Next, we prove that $W$ at the minimum value of the supremum $S_D$ can maximize the degeneracy emergence.\\

\textbf{Proof of Lemma \ref{thm.CESD}}
\begin{proof}
	$|\lambda_1|\geq|\lambda_2|\geq\dots\geq|\lambda_n|\geq 0$ are $n$ modulus of eigenvalues of matrix $A$ sorted from largest to smallest. $\Lambda={\rm diag}(\lambda_1, \dots, \lambda_n)\in \mathcal{R}^{n\times n}$. $\tilde{W}=WV$, $\tilde{W}^\dagger \tilde{W}$ and $I-\tilde{W}^\dagger \tilde{W}$ are both symmetric idempotent matrices. $rk(\tilde{W}^\dagger \tilde{W})=k$, $I-\tilde{W}^\dagger \tilde{W}=(r_{ij})_{n\times n}$ has the following properties:
	\begin{enumerate}[(1)]
		\item $\displaystyle\sum_{l=1}^{n}r_{il}r_{lj}=r_{ij}$.
		\item $||I-\tilde{W}^\dagger \tilde{W}||_F=\displaystyle\sqrt{\sum_{i,j=1}^{n,n}r_{ij}^2}=k$.
		\item $\forall l, 0<l\leq n, \displaystyle\sqrt{\sum_{i,j=1}^{n,n}(r_{il}r_{lj})^2}=1$
	\end{enumerate}
	Based on the theory of matrix modules, 
	\begin{eqnarray}\label{CESD}
		\begin{aligned}
			S_D&=||(A-W^\dagger WAW^\dagger W)||_F||x_t||+(n-k)||\varepsilon_t||\\
			_{(\tilde{W}=WV)}&=||V(\Lambda-\tilde{W}^\dagger \tilde{W}\Lambda \tilde{W}^\dagger \tilde{W})V^{-1}||_F||x_t||+(n-k)||\varepsilon_t||\\
			&=||\Lambda-\tilde{W}^\dagger \tilde{W}\Lambda \tilde{W}^\dagger \tilde{W}||_F||x_t||+(n-k)||\varepsilon_t||\\
			&=||(I_n-\tilde{W}^\dagger \tilde{W})\Lambda(I_n-\tilde{W}^\dagger \tilde{W})+2(I_n-\tilde{W}^\dagger \tilde{W})\Lambda\tilde{W}^\dagger \tilde{W}||_F||x_t||+(n-k)||\varepsilon_t||\\
			&\geq ||(I_n-\tilde{W}^\dagger \tilde{W})\Lambda(I_n-\tilde{W}^\dagger \tilde{W})||_F||x_t||+(n-k)||\varepsilon_t||\\
			&=\displaystyle\sqrt{\sum_{0<i,j\leq n}\left|\sum_{l=1}^n r_{il}r_{lj}\lambda_l\right|^2}||x_t||+(n-k)||\varepsilon_t||\\
			&\geq\displaystyle\sqrt{\sum_{0<i,j\leq n}\sum_{l=1}^n \left|r_{il}r_{lj}\lambda_l\right|^2}||x_t||+(n-k)||\varepsilon_t||\\
			&\geq\displaystyle\sqrt{\sum_{0<i,j\leq n}\sum_{l=k+1}^n \left|r_{il}r_{lj}\lambda_l\right|^2}||x_t||+(n-k)||\varepsilon_t||\\
			&\geq\displaystyle\sqrt{\sum_{l=k+1}^n \left|\lambda_l\right|^2}||x_t||+(n-k)||\varepsilon_t||.
		\end{aligned}
	\end{eqnarray}
 If and only if the maximum $k$ eigenvalues are preserved in $\hat{A}$, only the $n-k$ eigenvalues with the smallest modulus are retained $S_D$, the equal sign holds. $W=(\tilde{W}_k,O_{k\times{(n-k)}})V^{-1}$, exactly the same solution set as the degeneracy emergence at its maximum $\Delta\mathcal{J}_1^*$. 
\end{proof}


\bibliographystyle{ieeetr}
\bibliography{jiaoda11}

\end{document}